\DeclareAcronym{5g}{
short=5G,
long= fifth generation,
}
\DeclareAcronym{6g}{
short=6G,
long= sixth generation,
}
\DeclareAcronym{3d}{
short=3D,
long= three-dimensional,
}
\DeclareAcronym{2d}{
short=2D,
long= two-dimensional,
}
\DeclareAcronym{aod}{
short=AOD,
long= angle-of-departure,
}
\DeclareAcronym{aosa}{
short=AOSA,
long= array-of-subarray,
}
\DeclareAcronym{adod}{
short=ADOD,
long= angle-difference-of-departure,
}
\DeclareAcronym{aoa}{
short=AOA,
long= angle-of-arrival,
}
\DeclareAcronym{adc}{
short=ADC,
long= analog to digital converter,
}
\DeclareAcronym{aeb}{
short=AEB,
long= angle error bound,
}
\DeclareAcronym{av}{
short=AV,
long= autonomous vehicle,
}
\DeclareAcronym{bs}{
short=BS,
long= base station,
}
\DeclareAcronym{csi}{
short=CSI,
long= channel state information,
}
\DeclareAcronym{cfo}{
short=CFO,
long= carrier frequency offset,
}
\DeclareAcronym{ceb}{
short=CEB,
long= clock error bound,
}
\DeclareAcronym{coa}{
short=COA,
long= curvature-of-arrival,
}
\DeclareAcronym{crb}{
short=CRB,
long= Cram\'er-Rao bound,
}
\DeclareAcronym{ccrb}{
short=CCRB,
long= constrained Cram\'er-Rao bound,
}
\DeclareAcronym{cmos}{
short=CMOS,
long= complementary metal-oxide-semiconductor,
}
\DeclareAcronym{crlb}{
short=CRLB,
long= Cram\'er-Rao lower bound,
}
\DeclareAcronym{cdf}{
short=CDF,
long= cumulative distribution function,
}
\DeclareAcronym{cp}{
short=CP,
long= cyclic prefix,
}
\DeclareAcronym{dac}{
short=DAC,
long= digital to analog converter,
}
\DeclareAcronym{dfl}{
short=DFL,
long= device-free localization,
}
\DeclareAcronym{dmimo}{
short=D-MIMO,
long= distributed MIMO,
}
\DeclareAcronym{dlprs}{
short=DL-PRS,
long= downlink positioning reference signal,
}
\DeclareAcronym{d2d}{
short=D2D,
long= device-to-device,
}
\DeclareAcronym{dftsofdm}{
short=DFT-s-OFDM,
long= discrete-Fourier-transform spread OFDM,
}
\DeclareAcronym{dl}{
short=DL,
long= deep learning,
}
\DeclareAcronym{efim}{
short=EFIM,
long = equivalent Fisher information matrix,
}
\DeclareAcronym{gps}{
short=GPS,
long= global positioning system,
}
\DeclareAcronym{fim}{
short=FIM,
long = Fisher information matrix,
}
\DeclareAcronym{hwi}{
short=HWI,
long= hardware impairment,
}
\DeclareAcronym{hemt}{
short=HEMT,
long= high electron mobility transistor,
}
\DeclareAcronym{hbt}{
short=HBT,
long= heterojunction bipolar transistors,
}
\DeclareAcronym{iot}{
short=IoT,
long= internet of things,
}
\DeclareAcronym{imu}{
short=IMU,
long= inertial measurement unit,
}
\DeclareAcronym{isac}{
short=ISAC,
long= integrated sensing and communication,
}
\DeclareAcronym{iqi}{
short=IQI,
long= in-phase and quadrature imbalance,
}
\DeclareAcronym{ip}{
short=IP,
long= incidence point,
}
\DeclareAcronym{ia}{
short=IA,
long= initial access,
}
\DeclareAcronym{kpi}{
short=KPI,
long= key performance indicator,
}
\DeclareAcronym{kf}{
short=KF,
long= Kalman filter,
}
\DeclareAcronym{ekf}{
short=EKF,
long= extended Kalman filter,
}
\DeclareAcronym{ukf}{
short=UKF,
long= unscented Kalman filter,
}
\DeclareAcronym{ckf}{
short=CKF,
long= cubature Kalman filter,
}
\DeclareAcronym{pf}{
short=PF,
long= particle filter,
}
\DeclareAcronym{lb}{
short=LB,
long= lower bound,
}
\DeclareAcronym{lse}{
short=LSE,
long= least-square estimator,
}
\DeclareAcronym{lo}{
short=LO,
long= local oscillator,
}
\DeclareAcronym{los}{
short=LOS,
long= line-of-sight,
}
\DeclareAcronym{mc}{
short=MC,
long= mutual coupling,
}
\DeclareAcronym{mac}{
short=MAC,
long= medium access control,
}
\DeclareAcronym{meb}{
short=MEB,
long= mapping error bound,
}
\DeclareAcronym{ml}{
short=ML,
long= machine learning,
}
\DeclareAcronym{mcrb}{
short=MCRB,
long= misspecified Cram\'er-Rao bound,
}
\DeclareAcronym{mds}{
short=MDS,
long= multidimensional scaling ,
}
\DeclareAcronym{mimo}{
short=MIMO,
long= multiple-input-multiple-output,
}
\DeclareAcronym{siso}{
short=SISO,
long= single-input-single-output,
}
\DeclareAcronym{mm}{
short=MM,
long= mismatched model,
}
\DeclareAcronym{mpc}{
short=MPC,
long= multipath components,
}
\DeclareAcronym{mmwave}{
short=mmWave,
long= millimeter wave,
}
\DeclareAcronym{mmle}{
short=MMLE,
long= mismatched maximum likelihood estimation,
}
\DeclareAcronym{mems}{
short=MEMS,
long= micro-electro-mechanical system,
}
\DeclareAcronym{mle}{
short=MLE,
long= maximum likelihood estimation,
}
\DeclareAcronym{nlos}{
short=NLOS,
long= none-line-of-sight,
}
\DeclareAcronym{ofdm}{
short=OFDM,
long= orthogonal frequency-division multiplexing,
}
\DeclareAcronym{oeb}{
short=OEB,
long= orientation error bound,
}
\DeclareAcronym{otfs}{
short=OTFS,
long= orthogonal time-frequency space,
}
\DeclareAcronym{pdf}{
short=PDF,
long= probability density function,
}
\DeclareAcronym{papr}{
short=PAPR,
long= peak-to-average-power ratio,
}
\DeclareAcronym{pan}{
short=PAN,
long= power amplifier nonlinearity,
}
\DeclareAcronym{pa}{
short=PA,
long= power amplifier,
}
\DeclareAcronym{ps}{
short=PS,
long= phase shifter,
}
\DeclareAcronym{pn}{
short=PN,
long= phase noise,
}
\DeclareAcronym{poa}{
short=POA,
long= phase-of-arrival,
}
\DeclareAcronym{pwm}{
short=PWM,
long= planar wave model,
}
\DeclareAcronym{pdoa}{
short=PDOA,
long= phase-difference-of-arrival,
}
\DeclareAcronym{prs}{
short=PRS,
long= positioning reference signals,
}
\DeclareAcronym{peb}{
short=PEB,
long= position error bound,
}
\DeclareAcronym{rnn}{
short=RNN,
long= recurrent neural network,
}
\DeclareAcronym{rl}{
short=RL,
long= reinforcement learning,
}
\DeclareAcronym{rfc}{
short=RFC,
long= radio-frequency chain,
}
\DeclareAcronym{rf}{
short=RF,
long= radio frequency,
}
\DeclareAcronym{rfid}{
short=RFID,
long= radio frequency identification,
}
\DeclareAcronym{ris}{
short=RIS,
long= reconfigurable intelligent surface,
}
\DeclareAcronym{rss}{
short=RSS,
long= received signal strength,
}
\DeclareAcronym{rtt}{
short=RTT,
long= round-trip time,
}
\DeclareAcronym{sm}{
short=SM,
long= standard model,
}
\DeclareAcronym{sige}{
short=SiGe,
long= silicon-germanium,
}
\DeclareAcronym{spp}{
short=SPP,
long= surface plasmon polariton,
}
\DeclareAcronym{sa}{
short=SA,
long= subarray,
}
\DeclareAcronym{sota}{
short=SOTA,
long= state-of-the-art,
}
\DeclareAcronym{swm}{
short=SWM,
long= spherical wave model,
}
\DeclareAcronym{slam}{
short=SLAM,
long= simultaneous localization and mapping,
}
\DeclareAcronym{tm}{
short=TM,
long= true model,
}
\DeclareAcronym{toa}{
short=TOA,
long= time-of-arrival,
}
\DeclareAcronym{tof}{
short=TOF,
long= time-of-flight,
}
\DeclareAcronym{tdoa}{
short=TDOA,
long= time-difference-of-arrival,
}
\DeclareAcronym{thz}{
short=THz,
long= terahertz,
}
\DeclareAcronym{ue}{
short=UE,
long= user equipment,
}
\DeclareAcronym{ummimo}{
short=UM-MIMO,
long= ultra-massive multi-input-multi-output,
}
\DeclareAcronym{vlp}{
short=VLP,
long= visible light positioning,
}
\DeclareAcronym{veb}{
short=VEB,
long= velocity error bound,
}
\DeclareAcronym{vlc}{
short=VLC,
long= visible light communication,
}
\DeclareAcronym{ula}{
short=ULA,
long= uniform linear array,
}
\DeclareAcronym{upa}{
short=UPA,
long= uniform planar array,
}
\DeclareAcronym{wlan}{
short=WLAN,
long= wireless local area network,
}
\newtheorem{theorem}{Theorem}
\newtheorem{lemma}[theorem]{Lemma}
\newtheorem{prop}{Proposition}
\long\def\comment#1{}
\DeclareMathOperator*{\argmin}{arg\,min}
\newfont{\bbb}{msbm10 scaled 700}
\newcommand{\hthickline}{\noalign{\hrule height 0.80pt}}
\newfont{\bb}{msbm10 scaled 1100}
\newcommand{\av}{{\bf a}}
\newcommand{\bv}{{\bf b}}
\newcommand{\hv}{{\bf h}}
\newcommand{\pv}{{\bf p}}
\newcommand{\sv}{{\bf s}}
\newcommand{\tv}{{\bf t}}
\newcommand{\wv}{{\bf w}}
\newcommand{\vv}{{\bf v}}
\newcommand{\Am}{{\bf A}}
\newcommand{\Bm}{{\bf B}}
\newcommand{\Dm}{{\bf D}}
\newcommand{\Em}{{\bf E}}
\newcommand{\Fm}{{\bf F}}
\newcommand{\Jm}{{\bf J}}
\newcommand{\Om}{{\bf O}}
\newcommand{\Qm}{{\bf Q}}
\newcommand{\Xm}{{\bf X}}
\newcommand{\alphav}{\hbox{\boldmath$\alpha$}}
\newcommand{\gammav}{\hbox{\boldmath$\gamma$}}
\newcommand{\xiv}{\hbox{\boldmath$\xi$}}
\newcommand{\trace}{{\hbox{tr}}}
\renewcommand{\arg}{{\hbox{arg}}}
\begin{document}




\title{Doppler-Enabled Single-Antenna Localization and Mapping Without Synchronization}



\author{
\IEEEauthorblockN{Hui Chen, Fan Jiang, Yu Ge, Hyowon Kim, Henk Wymeersch}
\IEEEauthorblockA{
Department of Electrical Engineering, Chalmers University of Technology, Sweden\\
Email: \{hui.chen; fan.jiang; yuge; hyowon; henkw\}@chalmers.se}
}

\maketitle

\begin{abstract}
Radio localization is a key enabler for joint communication and sensing in the fifth/sixth generation (5G/6G) communication systems. With the help of multipath components (MPCs), localization and mapping tasks can be done with a single base station (BS) and single unsynchronized user equipment (UE) if both of them are equipped with an antenna array. However, the antenna array at the UE side increases the hardware and computational cost, preventing localization functionality. In this work, we show that with Doppler estimation and MPCs, localization and mapping tasks can be performed even with a single-antenna mobile UE. Furthermore, we show that the localization and mapping performance will improve and then saturate at a certain level with an increased UE speed. Both theoretical Cram\'er-Rao bound analysis and simulation results show the potential of localization under mobility and the effectiveness of the proposed localization algorithm.
\end{abstract}

\begin{IEEEkeywords}
Radio localization, mmWave, Doppler, multipath, CRB
\end{IEEEkeywords}
\section{Introduction}
Location and map information can assist communication in the
the millimeter wave/Terahertz (mmWave/THz) band 5G/6G systems, as well as a variety of location-based applications such as autonomous driving~\cite{wymeersch20175g}, tactile robots~\cite{haddadin2018tactile}, etc. Due to the geometrical channel property of high frequency radio signals~\cite{shahmansoori2017position}, the localization and mapping tasks can be done by exploiting the channel parameters (e.g., angles and delay) of each path from the estimated channel matrix~\cite{jiang2021beamspace}.
With the help of one or more reference points (e.g., \ac{bs} or \ac{ris}), the position and orientation of the \ac{ue}, as well as the position of the \ac{ip}, can be estimated~\cite{chen2021tutorial}.

Localization and mapping for \ac{ofdm}-based communication systems have been studied extensively; with the assistance of \ac{nlos} paths in \ac{2d} or \ac{3d} scenarios~\cite{shahmansoori2017position, abu2018error}. 
Recent research considers single-snapshot localization and mapping with a single-antenna receiver by assuming the system is synchronized~\cite{fascista2021downlink}.
However, only single snapshot localization\footnote{Even though multiple measurements are performed, they are all within a channel coherence interval so that the channel is assumed to be fixed, and we refer to this as single snapshot localization.} is discussed in these works, and no mobility is involved. 
Nevertheless, UE mobility could be found in a lot of scenarios, such as autonomous driving~\cite{wymeersch20175g} and high speed trains~\cite{talvitie2019positioning}, which degrades the localization accuracy without considering the Doppler effect~\cite{win2018theoretical}. 

The traditional way of dealing with this issue is to reduce the coherence time, within which the channel could be considered fixed. In addition, with a properly designed signal, such as Zadoff-Chu sequence~\cite{kim2022preamble}, the Doppler effect can be mitigated. However, estimating the Doppler could be a better option that can benefit localization and tracking~\cite{amar2008localization, shames2013doppler, han2015performance, kakkavas2019performance}. In~\cite{shames2013doppler}, passive localization using Doppler shift is discussed, proving that a unique position can be estimated with at least 5 Doppler measurements in 2D scenarios. The work in~\cite{han2015performance} shows that the Doppler effect increases the AOA information, which can be interpreted as the enlargement of the virtual array aperture brought by the movement. More recent work considers a single-BS \ac{mimo} \ac{ofdm} mmWave system, showing the NLOS-only scenario can be significantly improved by mobility, depending on the variance of the synchronization error~\cite{kakkavas2019performance}.



In this work, we consider a more challenging MISO scenario to locate a single-antenna \ac{ue} with a single \ac{ula} BS under unknown clock offset. The contributions of this work is summarized as follows: (i)~We show that with a sufficient number of \acp{mpc}, Doppler estimation can enable localization and mapping (which is previously impossible for a stationary UE) by providing extra radial velocity measurements, in addition to angles and delays; (ii)~We prove that the localization and mapping performance improves and then saturates with an increased velocity, while the velocity estimation performance keeps decreasing, which is determined by the geometrical relationship of different paths; and (iii)~We propose a simple 1-D search localization and mapping algorithm, and use simulations and \ac{crb} analysis to show the effectiveness of the algorithm.


\begin{figure}[t]
  \centering
\centerline{\includegraphics[width=0.9\linewidth]{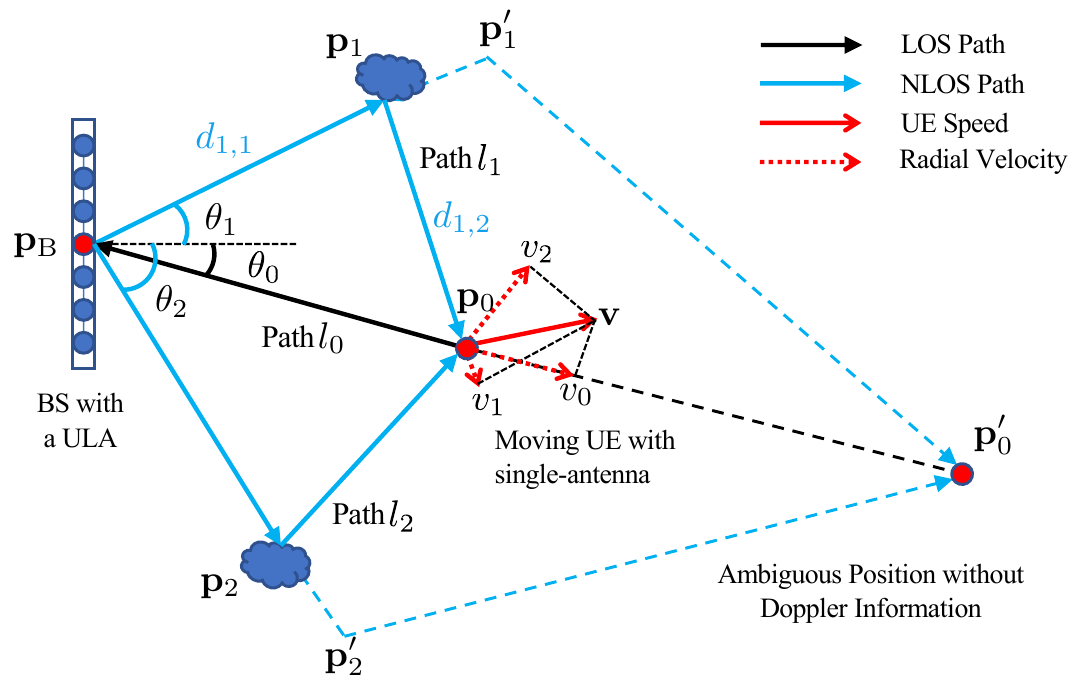}}
\caption{Illustration of Doppler-assisted localization. If the UE is stationary, the localization problem is not solvable where $\pv_0'$ is one of ambiguous positions that produces the same geometry parameters as the true position $\pv_0$ (i.e., AODs and time-difference-of-arrivals).} \vspace{-5mm}
\label{fig-1-illustration}
\end{figure}

\section{Problem Statement}
In this section, we start with the system model and then describe the relationship between the channel parameters and unknown states. How Doppler can assist localization with a sufficient number of \ac{nlos} paths are also discussed.
\subsection{System Model}
We consider a 2D downlink scenario with one BS equipped with a $N_\text{B}$-antenna \ac{ula} (with the center $\pv_\text{B}$ located at the origin of the coordinate system) and
one UE with a single antenna \cite{fascista2019millimeter}. The location of the UE at the $g$th transmission is $\mathbf{p}_{0,g}=\mathbf{p}_{0}+\mathbf{v}gT_\text{int}$ where $T_\text{int}$ is the measurement interval and $\vv$ is the velocity, assumed to be fixed within the observation duration $GT_\text{int}$ (where $G$ is the number of total transmissions). We assume there exist one \ac{los} path ($l=0$) and
$L\ge1$ \ac{nlos} paths (with \acp{ip} located at $\mathbf{p}_{l}, 1\le l \le L$), as illustrated in Fig.~\ref{fig-1-illustration}. If only one \ac{rfc} is adopted at both the \ac{bs} and \ac{ue} side, the observation model (the extension to 3D scenarios are also valid) can be formulated as
\begin{align}
    y_{g,k}=& \mu_{g,k} + n_{g,k} = \hv_{g,k}^\top \wv_{\text{B},g} x_{g,k} + n_{g,k},
    \label{eq:channel_model}
\end{align}
where $\mu_{g,k}$ is the noise-free version of the received signal, $\wv_{\text{B},g}$ is the precoding matrix at the BS of the $g$th transmission containing the phase-shifter coefficients with unit amplitude $|\wv_{\text{B},i}|=1$, $x_{g,k}$ is the transmitted signal symbol of the $g$th transmission at the $k$th subcarrier with a constant average transmission power $P$ ($|x_{g,k}|^2 = {P}$), $n_{g,k}\sim\mathcal{CN}(0, \sigma_n^2)$ is the additive white Gaussian noise. The channel matrix $\hv_{g,k}$ can be expressed as
\begin{align}
    \hv_{g,k}=&\sum_{l=0}^{L}\rho_l \av_{\text{B}}({\theta}_{\text{B},l})
    e^{-j 2\pi\Delta_{f}k\tau_{l}}e^{j 2\pi g T_\text{int} v_{l}/\lambda},
    \label{eq:channel_matrix}
\end{align}
where $\av_\text{B}({\theta})$ is the steering vector at BS, $\Delta_f$ is the subcarrier spacing, $\lambda$ is the wavelength, and $\rho_l$, $\theta_{\text{B},l}$, $\tau_l$, $v_l$ ($0\le l \le L$) are the channel parameters representing the complex channel gain, \ac{aod}, delay, and radial velocity of the $l$th path, respectively. Next, we describe the relationship between the channel parameters and state parameters.


\subsection{Channel Parameters}
Consider the sparse property of the channel, it is possible to use a limited number of  parameters to represent the channel matrix with a much larger dimension as shown in~\eqref{eq:channel_matrix}. We define a \emph{channel parameter vector} as $\gammav = [\gammav_0^\top, \ldots, \gammav_L^\top]^\top$ ($\gammav_l = [\theta_l, \tau_l, v_l]^\top$), and a \emph{state vector} as $\sv = [\pv_0^\top, \pv_1^\top, \ldots, \pv_L^\top, B, \vv^\top]^\top$. Note that for a complete FIM calculation, the nuisance parameters (i.e., complex channel gain $\rho_l=\alpha_l e^{-j\xi_l}$) should be considered such that $\tilde\gammav = [\gammav^\top, \alphav^\top, \xiv^\top]^\top$ and $\tilde\sv = [\sv^\top, \alphav^\top, \xiv^\top]^\top$ (where $\alphav = [\alpha_0, \ldots, \alpha_L]^\top$ and $\xiv = [\xi_0, \ldots, \xi_L]^\top$ contain the amplitude and phase of the channel gain of the $L+1$ paths) are able to present all the unknowns. In the following, we use $\gammav$ and $\sv$ to indicate the unknowns of interest for convenience. For a stationary scenario, the radial velocity $v_l$ as well as the velocity unknown $\vv$ can be removed. The relationship between channel parameters and the state parameters can be expressed as
\begin{align}
    \theta_l & = 
    \begin{cases} 
    \arctan2(t_{\text{B},0,2}, t_{\text{B},0,1}), \tv_{\text{B}, 0} = \frac{\pv_0 - \pv_\text{B}}{d_0} & l = 0\\
    \arctan2(t_{\text{B},l,2}, t_{\text{B},l,1}), \tv_{\text{B}, l} = \frac{\pv_{l} - \pv_\text{B}}{d_{l,1}} & l > 0
    \end{cases},
    \label{eq:angle_at_BS}
    \\
    \tau_l & = \frac{d_l + B}{c}=
    \begin{cases}
    \frac{{\Vert \pv_0 - \pv_\text{B} \Vert} + B}{c} & l=0\\
    \frac{{\Vert \pv_l - \pv_\text{B} \Vert} + {\Vert \pv_l - \pv_0 \Vert} + B}{c} & l>0
    \end{cases},
    \\
    v_{l} & = \mathbf{v}^{\top}\tv_{\text{U}, l} =
    \begin{cases}
    \mathbf{v}^{\top}\frac{\pv_{\text{B}}-\pv_{\text{U}}}{d_{0}} & l=0\\
    \mathbf{v}^{\top}\frac{\pv_{\text{l}}-\pv_{\text{U}}}{d_{l,2}} & l>0
    \end{cases},
    \label{eq:state_velocity}
\end{align}
where $\alphav$ and $\xiv$ are unknown vectors to be estimated, $d_0 = {\Vert \pv_0 - \pv_\text{B} \Vert}$ is the distance between the BS and UE, 
$d_l = d_{l,1} + d_{l,2} = {\Vert \pv_l - \pv_\text{B} \Vert} + {\Vert \pv_l - \pv_0 \Vert}$ 
is the actual signal propagation distance of the $l$th path, $c$ is the speed of light (in [m/s]), $B$ is the clock offset in [m], $\tv_{\text{B}, l}$, and $\tv_{\text{U}, l}$ are the direction vector at the BS and UE, respectively.
The channel parameters contain the geometry information of each path, which is assumed to be obtained by the channel parameter estimation methods (e.g., MD-ESPRIT~\cite{jiang2021beamspace}). The lower bound of the measurement parameter vector $\gammav$ can be obtained using \ac{crb}, which will be detailed in Section III-A.

\subsection{Doppler-assisted Localization}
\label{sec:doppler_assisted_localization}
From the channel model and the relationship between the channel parameter vector $\gammav$ and state vector $\sv$, we are able to explain how Doppler can assist in localization.
For $L$ resolvable NLOS paths, the numbers of unknown UE states and IP positions are $5$ (2D UE position, 2D UE velocity, and clock offset) and $2L$ (2D IP position for $L$ paths), respectively.
By contrast, the number of channel parameters is $3(L+1)$ (AOD, delay, velocity) with UE mobility, and $2(L+1)$ for stationary UE. 
If the number of channel parameters is larger than the number of unknowns, the \ac{fim} of the state unknowns are of full rank, indicating the localization problem is solvable.\footnote{This is valid for the scenarios described in this work with resolvable paths LOS and NLOS paths. However, when the paths are unresolvable, the rank calculation needs to be modified.}

Obviously, localization and mapping tasks cannot be completed for the MISO scenario with a stationary UE. However, when radial velocity can be estimated, the number of channel parameters could be larger than the number of unknowns with a sufficient number of NLOS paths. A summary of the minimum number of NLOS paths needed for localization is shown in Table~\ref{table_1:nlos_path_summary}.
Note that with a mobile UE, the localization and mapping can still be performed without the LOS channel, which is not impossible even with synchronization~\cite{fascista2021downlink}.
This table also provided the scenarios with known velocity information; the number of unknowns is reduced by two. Note that the LOS path only provides 2 channel parameter measurements with known velocity (due to the radial velocity can be calculated directly with known AOD), which can be easily verified by 2D-LOS (known $\vv$) cannot be localized with $L=0$).

\begin{table}[ht]
\scriptsize
    \centering
    \caption{Summary of Minimum Number of NLOS Paths needed for Doppler-assisted Localization}
    \begin{tabular}{ c|c|c|c } 
    \hline
    \shortstack{Localization\\ Scenarios} & \shortstack{Unknown\\ States} & \shortstack{Channel\\ Parameters} & \shortstack{Min. \# of\\ NLOS Paths} \\ 
    \hline
    With LOS (stationary) & $3+2L$ & $2+2L$ & \text{Unsolvable}
    \\
    \hline
    Without LOS (stationary) & $3+2L$ & $2L$ & \text{Unsolvable}
    \\
    \hline
    With LOS (mobile) & $5+2L$ & $3+3L$ & 2 
    \\
    \hline
    Without LOS (mobile) & $5+2L$ & $3L$ & 5 
    \\
    \hline
    With LOS (known $\vv$) & $3+2L$ & $2+3L$ & 1
    \\ 
    \hline
    Without LOS (known $\vv$) & $3+2L$ & $3L$ & 3\\
    \hline
    \end{tabular}
    \vspace{-0.2cm}
    \label{table_1:nlos_path_summary}
\end{table}

\section{Performance Analysis and Localization Algorithm}
\label{sec:performance_analysis_and_localization_algorithm}
In this section, we briefly describe the \ac{crb} of the unknown states (i.e., \ac{peb}, \ac{meb}, \ac{ceb} and \ac{veb}) UE state by deriving the Fisher information matrix under UE mobility. In addition, \ac{fim} analysis is performed to discuss the effect of speed on the estimation of unknowns. A simple localization and mapping algorithm is also proposed with the estimated channel parameters.\vspace{0mm}
\subsection{Cram\'er-Rao Bound}
Based on the channel model defined in~\eqref{eq:channel_model}, the CRB of the state parameters can be obtained as $\mathrm{CRB} \triangleq \left[\mathbf{I}(\sv)\right]^{-1} = \left[\Jm_\mathrm{S} \mathbf{I}({\gammav}) \Jm_\mathrm{S}^\top\right]^{-1}$, 
where $\mathbf{I}(\sv)$ is the \ac{efim} \cite{mendrzik2018harnessing} of the unknown state parameters $\sv$, $\mathbf{I}({\gammav})$ is the \ac{efim} of unknowns of interests $\gammav$ from $\mathbf{I}({\tilde\gammav})$ with
\begin{align}
    \mathbf{I}({\tilde\gammav}) & 
    = \frac{2}{\sigma_n^2}\sum^{G}_{g=1} \sum^K_{k=1}\mathrm{Re}\left\{
    \left(\frac{\partial\mu_{g,k}}{\partial{\tilde\gammav}}\right)^{\mathsf{H}} 
    \left(\frac{\partial\mu_{g,k}}{\partial{\tilde\gammav}}\right)\right\}.
    \label{eq:FIM_measurement}
\end{align}
Here, $\mathrm{Re}\{\cdot\}$ is getting the real part of a complex number, and $\Jm_\mathrm{S}\in \mathbb{R}^{(2L+5)\times(3L)}$ is the Jacobian matrix using a denominator-layout notation from the channel parameter vector $\gammav$ to the state vector $\sv$ as
\begin{equation}
    \Jm_\mathrm{S} \triangleq \frac{\partial {\gammav}}{\partial \sv} = 
    \begin{bmatrix}
        \frac{\partial \gammav_0}{\partial \sv}, & \ldots, & \frac{\partial \gammav_L}{\partial \sv}
    \end{bmatrix}.
    \label{eq:jacobian_state}
\end{equation}
For stationary scenarios, matrices $\mathbf{I}(\gammav)$ and $\Jm_\text{S}$ can be obtained similarly as~\cite{shahmansoori2017position}. With  UE mobility, the derivatives involving radial velocity $v_l$  of the $l$th path can be expressed as
\begin{align}
    \frac{\partial v_l}{\partial \pv_0} & =
    \begin{cases}
    (\frac{\partial{\tv_{\text{U},0}}}{\partial \pv_0})^\top\vv 
    = \frac{v_0\tv_{\text{U}, 0}-\vv }{d_\text{0}} & l=0\\
    (\frac{\partial{\tv_{\text{U}, l}}}{\partial \pv_0})^\top\vv 
    = \frac{v_l\tv_{\text{U},l}-\vv }{d_{l,2}} & l>0
    \label{eq:derivative_vl_PU}
    \end{cases},
    \\
    \frac{\partial{v_l}}{\partial{\pv_{l'}}} & =
    \begin{cases}
    (\frac{\partial{\tv_{\text{U},l}}}{\partial \pv_l})^\top\vv 
    = -\frac{v_l\tv_{\text{U},l}-\vv }{d_{l,2}} & l=l'>0\\
    0 & \text{others}
    \end{cases},
    \label{eq:derivative_vl_Pl}
    \\
    \frac{\partial{v_l}}{\partial{B}} & = 0,
    \quad
    \frac{\partial v_l}{\partial \vv}  =
    \tv_{\text{U},l}.
    \label{eq:derivative_vl_VU}
\end{align}
Finally, 
 we can define the \ac{peb}, \ac{meb}, \ac{meb} and \ac{veb} as
\begin{align}
\text{PEB} & = \sqrt{\trace([\mathrm{CRB}]_{1:2, 1:2})},
\label{eq:PEB}
\\
\text{MEB}_l & = \sqrt{\trace([\mathrm{CRB}]_{(2l+1):(2l+2), (2l+1):(2l+2)})},
\label{eq:MEB}
\\
\text{CEB} & = \sqrt{[\mathrm{CRB}]_{(2L+3),(2L+3)}},
\label{eq:CEB}
\\
\text{VEB} & = \sqrt{\trace([\mathrm{CRB}]_{(2L+4):(2L+5), (2L+4):(2L+5)})},
\label{eq:VEB}
\end{align}
where $\trace(\cdot)$ returns the trace of a matrix, and $[\cdot]_{i,j}$ is getting the element in the $i$th row, $j$th column of a matrix. The bounds from~\eqref{eq:PEB}--\eqref{eq:VEB} will be used to evaluate the localization and mapping performance.

\subsection{FIM Analysis for Varying Speed}
To evaluate the effect of speed on localization and mapping performance, we re-order the rows and columns of the matrix $\Jm_\text{S}$ (take $L=2$ for illustration) as
\begin{align}
    & \Jm_\mathrm{R} =     
    \left[
    \begin{array}{c : c}
        \Am & \Bm \\ \hdashline
        \Om & \Dm 
    \end{array}
    \right] = \label{eq:jacobian_reorder}
    \\
    & \left[\begin{array}{c c c c c c : c c c}
        \vspace{1mm}
        \frac{\partial \theta_0}{\partial \pv_0} 
        & \frac{\partial \tau_0}{\partial \pv_0} 
        & \mathbf{0}_{2}
        & \frac{\partial \tau_1}{\partial \pv_0} 
        & \mathbf{0}_{2}
        &\frac{\partial \tau_2}{\partial \pv_0}
        & \frac{\partial v_0}{\partial \pv_0} 
        & \frac{\partial v_1}{\partial \pv_0} 
        & \frac{\partial v_2}{\partial \pv_0} 
        \\        
        \vspace{1mm}
        \mathbf{0}_{2} & \mathbf{0}_{2}
        & \frac{\partial \theta_1}{\partial \pv_\text{1}} 
        & \frac{\partial \tau_1}{\partial \pv_\text{1}} 
        & \mathbf{0}_{2}
        & \mathbf{0}_{2}
        & \mathbf{0}_{2}
        & \frac{\partial v_1}{\partial \pv_\text{1}} 
        & \mathbf{0}_{2}
        \\
        \mathbf{0}_{2} & \mathbf{0}_{2}
        & \mathbf{0}_{2}
        & \mathbf{0}_{2}
        & \frac{\partial \theta_1}{\partial \pv_\text{2}} 
        & \frac{\partial \tau_1}{\partial \pv_\text{2}} 
        & \mathbf{0}_{2}
        & \mathbf{0}_{2}
        & \frac{\partial v_2}{\partial \pv_\text{2}} 
        \\
        \vspace{1mm}
        0 & \frac{\partial \tau_0}{\partial B} 
        & 0
        & \frac{\partial \tau_1}{\partial B} 
        & 0
        & \frac{\partial \tau_2}{\partial B}
        & 0
        & 0
        & 0
        \\
        \hdashline
        \vspace{1mm}
        \mathbf{0}_{2}
        & \mathbf{0}_{2}
        & \mathbf{0}_{2}
        & \mathbf{0}_{2}
        & \mathbf{0}_{2}
        & \mathbf{0}_{2}
        & \frac{\partial v_0}{\partial \vv} 
        & \frac{\partial v_1}{\partial \vv} 
        & \frac{\partial v_2}{\partial \vv}
    \end{array}
    \right],\notag 
\end{align}
where $\mathbf{0}_{2}$ is a $2\times 1$ zero vector. By defining a direction vector $\tv_v = [\cos(\theta_v), \sin(\theta_v)]^\top$ and speed $v=\Vert \vv \Vert$ such that $\vv = v\tv_v$, we noticed that the submatrices $\Am$, $\Om$ and $\Dm$ will not change with speed $v$ if the velocity direction $\tv_v$ is fixed. Based on this observation, we define a matrix $\bar \Bm = \Bm/v$. In the following, we show that the PEB, MEB and CEB will reduce and saturate with an increased $v$, showing the velocity can improve these bounds to a certain level.

With reasonable assumptions, such as the AODs have no spatial correlation, delays of different paths can be resolved, the FIM of the delays and AODs in stationary scenario can be approximated as a diagonal matrix~\cite{mendrzik2018harnessing}. We make further approximation by assuming the radial velocity is independent of other channel parameters and the reordered FIM $\Fm$ can be formulated from $\mathbf{I}(\gammav)$ as 
\begin{equation}
    \Fm(\gammav) = \text{diag}(\mathbf{I}(\gammav)^{-1})^{-1},
    \label{eq:approximated_fim}
\end{equation}
where $\text{diag}(\cdot)$ is the operation that keeps only the diagonal elements of a matrix (i.e., to form a diagonal matrix). We further segment $\Fm(\gammav)$ into two diagonal matrix as $\Fm(\gammav) = \text{blkdiag}(\Fm_1, \Fm_2)$ and the matrices $\Fm_1\in \mathbb{R}^{(2L+2)\times(2L+2)}$ and $\Fm_2\in \mathbb{R}^{(L+1)\times(L+1)}$ contain the variance of all the channel parameters that can be described as
$\Fm_1 = \text{diag}(1/\sigma^2_{\theta_0}, 1/\sigma^2_{\tau_0}, 1/\sigma^2_{\theta_1}, 1/\sigma^2_{\tau_1}, 1/\sigma^2_{\theta_2}, 1/\sigma^2_{\tau_2})$, and $\Fm_2 = \text{diag}(1/\sigma^2_{v_0}, 1/\sigma^2_{v_1}, 1/\sigma^2_{v_2})$.
The approximated FIM can then be formed as $\Jm_\text{S}\Fm(\gammav) \Jm_\text{S}^\top$, and we will show that the error of this approximation is in the simulation Section~\ref{sec:doppler_enhanced_localization}.

The EFIM of the state vector containing positions (both UE and IPs) and clock offset can be expressed as
\begin{equation}
\begin{split}
    \Em_\text{s} = & \underbrace{\Am\Fm_1\Am^\top}_{\Am_\text{S} \text{(stationary info)}} 
    + \underbrace{v^2\bar \Bm\Fm_2\bar \Bm^\top}_{\Bm_\text{G}=v^2\bar{\Bm}_\text{G} \ \text{(mobility gain)}} \\
    & - \underbrace{v^2\bar\Bm\Fm_2\Dm^\top (\Dm\Fm_2\Dm^\top )^{-1} \Dm\Fm_2\bar\Bm^\top}_{\Bm_\text{L}=v^2\bar{\Bm}_\text{L}\ \text{(mobility estimation loss)}}.
    \label{eq:EFIM_1}
\end{split}
\end{equation}
The matrix $\Am_\text{S}$ is the FIM for stationary UE, the matrix $\Bm_\text{G}$ is the information gain with UE mobility if the velocity vector $\vv$ is known, and $\Bm_\text{L}$ is the information loss with unknown UE velocity. 

Similarly, the EFIM of the UE velocity can be expressed as
\begin{equation}
    \Em_\text{v} = \underbrace{\Dm\Fm_2\Dm^\top}_{=\Dm_0 \text{(velocity info)}} 
    - \underbrace{v^2\Dm\Fm_2\Bm^\top (\Am_0+v^2\bar\Bm_\text{G})^{-1} \Bm\Fm_2\Dm^\top}_{=\Dm_\text{L}=v^2\bar{\Dm}_\text{L}\text{(unknown estimation loss)}}.
    \label{eq:EFIM_2}
\end{equation}
where $\Dm_0$ contains the velocity from the radial velocity estimation and $\Dm_\text{L}$ is the information loss due to other unknown parameters.

\begin{lemma}
\label{lemma:EFIM_position}
The EFIM of the UE and IP positions $\Em_\text{p}$ and the EFIM of  clock offset $E_\text{c}$ are given by
\begin{align}
    \Em_\text{p} & = \Am_0 + v^2\Bm_0 = \Am_1 - \Am_2 A_4^{-1}\Am_3 + v^2\Bm_0,
    \label{eq:EFIM_position}
    \\
    E_\text{c} & = A_4 - \av^\top_2(\Am_1 + v^2\Bm_0)^{-1}\av_2,
\end{align}
where $\Am_1 = [\Am_\text{S}]_{1:(2L+2), 1:(2L+2)}$, $\av_2 = [\Am_\text{S}]_{1:(2L+2), (2L+3)}$, $A_4 = [\Am_\text{S}]_{(2L+3), (2L+3)}$ are the submatrices of the matrix $\Am_\text{S}$, and $\Bm_0 = [\bar\Bm_\text{G} - \bar\Bm_\text{L}]_{1:(2L+2), 1:(2L+2)}$.
\end{lemma}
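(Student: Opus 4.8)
The plan is to obtain both $\Em_\text{p}$ and $E_\text{c}$ from the already-established expression \eqref{eq:EFIM_1} for $\Em_\text{s}$ by two successive Schur-complement (EFIM marginalization) operations: one that eliminates the clock offset $B$ to leave the positions, and one that eliminates the position block to leave $B$. The only nonroutine ingredient is a structural observation about \emph{where} the mobility terms $\Bm_\text{G}$ and $\Bm_\text{L}$ act, which I would establish first; everything after that is block-matrix bookkeeping.

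First I would partition the $(2L+3)\times(2L+3)$ matrix $\Em_\text{s}$ into the $(2L+2)$-dimensional position block $[\pv_0^\top,\ldots,\pv_L^\top]$ and the scalar clock-offset entry $B$. The key claim is that the entire mobility contribution $v^2(\bar\Bm_\text{G}-\bar\Bm_\text{L})$ is supported on the position block. This follows because, by \eqref{eq:derivative_vl_VU}, $\partial v_l/\partial B = 0$ for every $l$, so the row of $\bar\Bm=\Bm/v$ indexed by $B$ vanishes. Writing $\bar\Bm = [\bar\Bm_\text{pos}^\top,\ \mathbf{0}]^\top$ and noting that both $\bar\Bm_\text{G}=\bar\Bm\Fm_2\bar\Bm^\top$ and $\bar\Bm_\text{L}=\bar\Bm\Fm_2\Dm^\top(\Dm\Fm_2\Dm^\top)^{-1}\Dm\Fm_2\bar\Bm^\top$ have the form $\bar\Bm\, \Mm\, \bar\Bm^\top$ with $\Mm$ symmetric, a zero row of $\bar\Bm$ forces a zero clock row and, by symmetry, a zero clock column in each. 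Hence
\[
\Em_\text{s} =
\begin{bmatrix}
\Am_1 + v^2\Bm_0 & \av_2 \\
\av_2^\top & A_4
\end{bmatrix},
\]
where $\Am_1,\av_2,A_4$ are the blocks of the stationary FIM $\Am_\text{S}$ and $\Bm_0 = [\bar\Bm_\text{G}-\bar\Bm_\text{L}]_{1:(2L+2),1:(2L+2)}$, exactly as in the statement.

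With this block form in hand, both identities follow from the standard EFIM/Schur-complement formula. Marginalizing out the clock offset yields $\Em_\text{p}$ as the Schur complement of $A_4$, namely $\Em_\text{p}=(\Am_1+v^2\Bm_0)-\av_2 A_4^{-1}\av_2^\top$; identifying $\Am_2=\av_2$ and $\Am_3=\av_2^\top$ (legitimate by symmetry of the FIM) and collecting the stationary part as $\Am_0=\Am_1-\Am_2 A_4^{-1}\Am_3$ reproduces \eqref{eq:EFIM_position}. Symmetrically, marginalizing out the positions yields $E_\text{c}$ as the Schur complement of the position block $\Am_1+v^2\Bm_0$, i.e. $E_\text{c}=A_4-\av_2^\top(\Am_1+v^2\Bm_0)^{-1}\av_2$, which is the second claimed identity.

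The main obstacle is the structural claim that the clock offset decouples from the mobility information; once $\partial v_l/\partial B=0$ is used to show that the clock row and column of $v^2(\bar\Bm_\text{G}-\bar\Bm_\text{L})$ vanish, the remaining two steps are routine. I would also verify the invertibility of the blocks being eliminated (i.e. $A_4>0$ and $\Am_1+v^2\Bm_0\succ 0$), which holds whenever the underlying FIM is nonsingular, that is, in the identifiable regime summarized in Table~\ref{table_1:nlos_path_summary}.
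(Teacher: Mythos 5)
Your proposal is correct and follows essentially the same route as the paper's proof: the paper likewise observes that the clock-offset row of $\Bm$ is zero (segmenting $\Bm$ into $\tilde\Bm=[\Bm]_{1:(2L+2),1:(L+1)}$ and a zero row, so that the last row and column of $\Bm_\text{G}$ and $\Bm_\text{L}$ vanish) and then invokes the matrix inverse lemma, i.e.\ exactly your Schur-complement marginalizations. Your write-up merely makes explicit the quadratic-form argument $\bar\Bm\,\Mm\,\bar\Bm^\top$ behind the zero clock row/column and the invertibility caveats, both of which the paper leaves implicit.
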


\begin{proof}
By segmenting $\Bm$ into a $\tilde \Bm = [\Bm]_{1:(2L+2), 1:(L+1)}$ and a zero vector $\mathbf{0}_{1\times (L+1)}$, we can see that the last row and last column of matrices $\Bm_\text{G}$, $\Bm_\text{L}$ are all zeros, and Lemma~\ref{lemma:EFIM_position} can be obtained based on the matrix inverse lemma.
\end{proof}

\begin{prop}
When the speed $v\to 0$, the localization problem is not solvable.
\end{prop}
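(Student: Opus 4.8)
The plan is to show that the state EFIM $\Em_\text{s}$ in~\eqref{eq:EFIM_1} degenerates into a singular, rank-deficient matrix as $v\to 0$, so that its inverse --- hence the CRB and every error bound in~\eqref{eq:PEB}--\eqref{eq:CEB} --- fails to exist. First I would pass to the limit in~\eqref{eq:EFIM_1}. Since the two mobility terms are written as $\Bm_\text{G}=v^2\bar\Bm_\text{G}$ and $\Bm_\text{L}=v^2\bar\Bm_\text{L}$, and since the normalized factors $\bar\Bm=\Bm/v$, $\Fm_2$, and $\Dm$ do not depend on the speed $v$ once the direction $\tv_v$ is fixed (the radial velocities enter only through the Doppler phase, so $\Fm_2$ stays finite, and each entry of $\Bm$ is linear in $v$ so $\bar\Bm$ is bounded), both mobility terms vanish like $v^2\to 0$. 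Hence $\Em_\text{s}\to\Am_\text{S}=\Am\Fm_1\Am^\top$, i.e.\ the purely stationary FIM of the positions and clock offset.

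The core of the argument is then a rank count on $\Am$. As read off from the reordered Jacobian~\eqref{eq:jacobian_reorder}, the block $\Am$ maps the $2(L+1)=2L+2$ stationary channel parameters (the AODs and delays) to the $2L+3$ position-and-clock states, so $\Am\in\mathbb{R}^{(2L+3)\times(2L+2)}$ has strictly more rows than columns and therefore $\text{rank}(\Am)\le 2L+2$. Because $\Fm_1$ is a diagonal, positive-definite matrix of size $2L+2$ (full rank), we get $\text{rank}(\Am_\text{S})=\text{rank}(\Am\Fm_1\Am^\top)=\text{rank}(\Am)\le 2L+2<2L+3$, so $\Am_\text{S}$ is singular.

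Finally I would conclude that a singular limiting EFIM means the CRB is unbounded along at least one direction of the state space, so the PEB/MEB/CEB diverge and the positions cannot be consistently estimated; equivalently, the number of stationary measurements ($2L+2$) is strictly smaller than the number of unknowns ($2L+3$), which is exactly the ``Unsolvable'' entry for the stationary rows of Table~\ref{table_1:nlos_path_summary}. The same limiting argument applied through Lemma~\ref{lemma:EFIM_position} gives $\Em_\text{p}\to\Am_0=\Am_1-\Am_2A_4^{-1}\Am_3$, the stationary position EFIM, which inherits this rank deficiency.

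The step I expect to be the main obstacle is making the vanishing of the mobility terms fully rigorous, i.e.\ verifying that $\bar\Bm_\text{G}$ and $\bar\Bm_\text{L}$ remain bounded as $v\to 0$ --- in particular that $\Dm\Fm_2\Dm^\top$ and $(\Am_0+v^2\bar\Bm_\text{G})$ do not become ill-conditioned in the limit --- so that the $v^2$ prefactor genuinely drives both terms to zero. The rank/dimension count itself is routine once the block structure of~\eqref{eq:jacobian_reorder} is fixed.
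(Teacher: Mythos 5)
Your proof is correct and follows the same skeleton as the paper's own argument: let the $v^2$-scaled mobility terms in \eqref{eq:EFIM_1} vanish so that the state EFIM collapses to the stationary term $\Am_\text{S}=\Am\Fm_1\Am^\top$, then show $\Am_\text{S}$ is singular on the $(2L+3)$-dimensional position-and-clock space. Where you diverge is the rank step, and your version is more elementary: since $\Am\in\RR^{(2L+3)\times(2L+2)}$ has fewer columns than rows and $\Fm_1\succ 0$, the bound $\mathrm{rank}(\Am_\text{S})=\mathrm{rank}(\Am)\le 2L+2<2L+3$ is a pure dimension count, which is all the proposition needs. The paper instead exhibits the leading $(2L+2)\times(2L+2)$ submatrix $\tilde\Am$ of \eqref{eq:jacobian_reorder} as block upper triangular with rank-$2$ diagonal blocks $\left[\partial\theta_l/\partial\pv_l,\ \partial\tau_l/\partial\pv_l\right]$, proving the rank is \emph{exactly} $2L+2$; that extra work buys the interpretation the paper attaches to the result --- the nullspace is exactly one-dimensional and attributable to the unknown clock offset $B$ --- whereas your count alone cannot rule out a larger deficiency, so if you want that sharper statement you would need the paper's structural argument or an equivalent. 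Two further remarks: your explicit check that $\bar\Bm$, $\Fm_2$, and $\Dm$ are $v$-independent (so the mobility terms are genuinely $O(v^2)$) makes rigorous a limit the paper leaves implicit, and is a welcome addition; and the obstacle you flag at the end is not actually an issue for this proposition, since $\Dm=[\tv_{\text{U},0},\ldots,\tv_{\text{U},L}]$ and $\Fm_2$ do not depend on $v$ at all (so $\Dm\Fm_2\Dm^\top$ is generically invertible and well-conditioned uniformly in $v$), while $(\Am_0+v^2\bar\Bm_\text{G})^{-1}$ appears only in the velocity EFIM \eqref{eq:EFIM_2}, which the unsolvability claim does not invoke.
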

\begin{proof}
We notice that the matrix $\tilde \Am = [\Am]_{1:(2L+2), 1:(2L+2)}$ (first $2L+2$ rows, without the row containing clock offset) from~\eqref{eq:jacobian_reorder} is a block uppler diagonal matrix. Since the matrix $[{\partial \theta_l}/{\partial \pv_l}, {\partial \tau_l}/{\partial \pv_l}]$ is of rank 2 (nonzero determinant), $\tilde \Am$ has full rank. Then, $\Am$ is of rank $(2L+2)$ and hence, $\Am_\text{S} =\Am\Fm_1 \Am^\top $ is of rank $(2L+2)$. As a consequence, the matrix $\Am_\text{S}$ is of rank $2L+2$ (which is not full rank due to the unknown clock offset), indicating the localization and mapping can not be performed without UE mobility.\footnote{{Note that the VEB does not grow with increasing UE/IP position estimation error. Instead, it is more affected by the speed. For example, when the speed (norm of the velocity) is small, the radial velocity estimations are also small, resulting a small velocity estimation error regardless of how large the UE/IP position estimation errors are (as shown in the simulation). However, this velocity estimation (although with a small error bound) does not help in solving the localization problem since the scenario is almost stationary.}}
\end{proof}
This proposition is clearly congruent with the observations from Section \ref{sec:doppler_assisted_localization}.
Moreover,  the scenario with $v \to 0$ is equivalent to the setup in \cite{fascista2019millimeter,fascista2022ris}, where the localization problem was solved by assuming perfect knowledge of the clock bias $B$ \cite{fascista2019millimeter} or a reconfigurable intelligent surface \cite{fascista2022ris}. 
\begin{prop}
\label{prop:EP_Ec_constant}
When the speed $v\to \infty$, the PEB, MEB, and CEB will converge to a constant value, whereas the VEB keeps increasing with $v$.
\end{prop}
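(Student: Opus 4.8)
The plan is to treat the position/clock bounds and the velocity bound separately, since by Lemma~\ref{lemma:EFIM_position} and~\eqref{eq:EFIM_2} they are governed by two different matrix pencils in $v^2$. The single structural fact driving everything is that the mobility term is rank-deficient. Writing $\Fm_2^{1/2}$ for the symmetric square root of the diagonal $\Fm_2$ and factoring the two mobility terms in~\eqref{eq:EFIM_1}, I would first show
\[
\bar\Bm_\text{G}-\bar\Bm_\text{L} = \bar\Bm\Fm_2^{1/2}\left(\Id-\Pm\right)\Fm_2^{1/2}\bar\Bm^\top,
\quad \Pm = \Fm_2^{1/2}\Dm^\top(\Dm\Fm_2\Dm^\top)^{-1}\Dm\Fm_2^{1/2},
\]
where $\Pm$ is the orthogonal projector onto $\mathrm{range}(\Fm_2^{1/2}\Dm^\top)$. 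Since $\Dm\in\mathbb{R}^{2\times(L+1)}$ has rank $2$, the factor $\Id-\Pm$ has rank $L-1$, so $\Bm_0\succeq\mathbf{0}$ has rank $L-1<2L+2$. This rank deficiency --- mobility only injects information along the directions not already absorbed by the radial-velocity projection --- is precisely the mechanism that makes the position/clock bounds saturate rather than vanish.

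For the PEB and MEB I would then invoke the standard limit for a positive-semidefinite pencil: for $\Em_\text{p}=\Am_0+v^2\Bm_0$ with $\Bm_0\succeq\mathbf{0}$ rank-deficient, the eigenvalues diverge only along $\mathrm{range}(\Bm_0)$ while the remaining directions stay controlled by $\Am_0$, so $\Em_\text{p}^{-1}\to\Qm_2(\Qm_2^\top\Am_0\Qm_2)^{-1}\Qm_2^\top$ as $v\to\infty$, with the columns of $\Qm_2$ spanning $\mathrm{null}(\Bm_0)$. The UE/IP diagonal blocks of this constant limit give $\mathrm{PEB},\mathrm{MEB}_l\to\text{const}$. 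Applying the same pencil limit to $(\Am_1+v^2\Bm_0)^{-1}$ in $E_\text{c}$, and using that $\Am_1\succ\mathbf{0}$ (so no extra condition is needed there), shows $E_\text{c}$ converges to a finite positive constant, hence $\mathrm{CEB}=1/\sqrt{E_\text{c}}$ saturates. The point needing care is that $\Am_0$ is itself rank-deficient by one --- the clock ambiguity of the preceding proposition ($v\to0$) --- so I must verify that its single null direction is not contained in $\mathrm{null}(\Bm_0)$, i.e.\ that $\Qm_2^\top\Am_0\Qm_2\succ\mathbf{0}$; this is exactly the statement that mobility restores solvability.

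For the VEB the behaviour is opposite. Setting $\Gm=\bar\Bm\Fm_2^{1/2}$ and $\Hm=\Fm_2^{1/2}\Dm^\top$ on the position block, the loss term in~\eqref{eq:EFIM_2} reads $v^2\Hm^\top\bigl[\Gm^\top(\Am_0+v^2\Gm\Gm^\top)^{-1}\Gm\bigr]\Hm$, because the velocity--position coupling $\bar\Bm\Fm_2\Dm^\top=\Gm\Hm$ lies in $\mathrm{range}(\Gm)$. A short SVD/Schur-complement computation (equivalently Woodbury) gives $v^2\Gm^\top(\Am_0+v^2\Gm\Gm^\top)^{-1}\Gm\to\Id$ as $v\to\infty$ whenever $\Gm$ has full column rank, so the loss term tends to $\Hm^\top\Hm=\Dm\Fm_2\Dm^\top=\Dm_0$ and therefore $\Em_\text{v}\to\mathbf{0}$. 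Expanding to next order yields $\Em_\text{v}=v^{-2}\Hm^\top\Sm\Hm+o(v^{-2})$ for a fixed matrix $\Sm$ (a Schur complement of $\Am_0$), so $\Em_\text{v}^{-1}$ grows like $v^2$ and $\mathrm{VEB}=\sqrt{\trace(\Em_\text{v}^{-1})}$ increases (linearly in $v$ in the generic case).

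I expect the main obstacle to be the exact cancellation $\Dm_\text{L}\to\Dm_0$ in the velocity EFIM: it is not merely that the loss term stays bounded, but that it converges to the full radial-velocity information $\Dm_0$, which is what forces $\Em_\text{v}\to\mathbf{0}$. This hinges on two facts to be checked under the paper's genericity assumptions --- that $\Cm=\bar\Bm\Fm_2\Dm^\top$ lies exactly in $\mathrm{range}(\Gm)$, and that $\Am_0+v^2\Gm\Gm^\top$ remains positive definite (no common null vector of $\Am_0$ and $\Gm\Gm^\top$), so that the limit $v^2\Gm^\top(\Am_0+v^2\Gm\Gm^\top)^{-1}\Gm\to\Id$ holds. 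The same no-common-null-vector condition, in its position-pencil form $\Qm_2^\top\Am_0\Qm_2\succ\mathbf{0}$, is what guarantees the finite, strictly positive saturation limits for the PEB, MEB, and CEB.
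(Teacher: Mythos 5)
Your proposal is correct, and it reaches the conclusion by a genuinely different route than the paper. For the saturation of the PEB, MEB and CEB, the paper decomposes $\Bm_0$ into rank-one terms via an SVD and applies Miller's inverse-update lemma recursively, obtaining the finite limit $\Em^{-1}_{\text{p},R_\text{B}}$ term by term; you instead invoke the positive-semidefinite pencil limit $\Em_\text{p}^{-1}\to\Qm_2(\Qm_2^\top\Am_0\Qm_2)^{-1}\Qm_2^\top$ directly. The two are equivalent, but your version makes explicit both the limiting matrix and the regularity condition $\Qm_2^\top\Am_0\Qm_2\succ\mathbf{0}$ (i.e., $\mathrm{null}(\Am_0)\cap\mathrm{null}(\Bm_0)=\{\mathbf{0}\}$, needed because $\Am_0$ is singular by one due to the clock ambiguity), which the paper leaves implicit; your rank bookkeeping ($\mathrm{rank}(\Bm_0)\le L-1$ via the projector $\Id-\Pm$) also explains structurally \emph{why} the improvement saturates rather than growing without bound. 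For the VEB your argument is sharper than the paper's: the paper treats $(\Am_0+v^2\bar\Bm_\text{G})^{-1}$ as essentially constant for large $v$ and concludes that $\Dm_\text{L}$ ``increases linearly with $v^2$'', which read literally would make $\Em_\text{v}$ negative and contradict positive semidefiniteness (note that the limiting inverse annihilates $\mathrm{range}(\Gm)$, so the naive $v^2$ growth does not occur). The correct mechanism is the one you give: $v^2\Gm^\top(\Am_0+v^2\Gm\Gm^\top)^{-1}\Gm\to\Id$, hence $\Dm_\text{L}\to\Dm_0$, $\Em_\text{v}=\Theta(v^{-2})$, and the VEB grows linearly in $v$ --- a rate the paper's argument does not predict but which matches the slope of its own Fig.~3(b). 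Two small cautions, neither a gap since you flagged both: in the velocity EFIM the matrix you call $\Am_0$ is really the full stationary position-plus-clock block $\Am_\text{S}$, which the preceding proposition shows is singular, so your Woodbury step must be run in its regularized/limit form --- your ``no common null vector'' condition suffices, and it is guaranteed whenever the problem is solvable, since then $\Am_\text{S}+v^2\Gm\Gm^\top\succeq\Am_\text{S}+v^2(\bar\Bm_\text{G}-\bar\Bm_\text{L})=\Em_\text{s}\succ\mathbf{0}$; and the exact cancellation $\Dm_\text{L}\to\Dm_0$ additionally needs $\Gm=\bar\Bm\Fm_2^{1/2}$ to have full column rank $L+1$, which holds generically for the geometries considered (each NLOS column has support in a distinct IP block).
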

\begin{proof}
Based on the lemma derived in~\cite{miller1981inverse}, stating if $\Qm$ has rank one, $\Em$ and $\Em+\Qm$ are nonsingular, then 
\begin{equation}
    (\Em+v\Qm)^{-1} = \Em^{-1}-\frac{v}{1+v\text{tr}(\Qm\Em^{-1})}\Em^{-1}\Qm\Em^{-1}.
    \label{eq:miller_lemma}
\end{equation}
Therefore, we can decompose $\Bm_0$ into a summation of several rank-1 matrices $\Bm_1, \Bm_2, \ldots, \Bm_{R_B}$ by using SVD, where $R_\text{B}$ is the rank of $\Bm_0$.
We can see the improvement of localization and mapping performance will saturate as $\Em^{-1}_\text{p}(v\to{\infty}) = \Em^{-1}_{\text{p},R_\text{B}}$, which can be obtained recursively from~\eqref{eq:miller_lemma} as
\begin{equation}
    \Em_{\text{p},i}^{-1} = 
    \begin{cases}
    \Am_0^{-1}-\frac{1}{\text{tr}(\Bm_1\Am_0^{-1})}\Am_0^{-1}\Bm_1 \Am_0^{-1} & i = 1,\\
    \Em_{\text{p},i-1}^{-1}-\frac{1}{\text{tr}(\Bm_i\Em_{\text{p},i-1}^{-1})}\Em_{\text{p},i-1}^{-1}\Bm_i\Em_{\text{p},i-1}^{-1} & i \ne 1,
    \end{cases}
    \label{eq:miller_lemma_iterative}
\end{equation}
where $\Em_{\text{p},i} = \Am_0 + \sum_{j=1}^{i}v^2\Bm_j$. Since we have shown in Lemma~\eqref{prop:EP_Ec_constant} that $(\Am_0 + v^2\Bm_0)^{-1}$ is getting close to a constant matrix when $v\to\infty$, $E_\text{c}$ is also close to a constant.
Considering $\Am_0+\Bm_\text{G}=\Am_0+v^2\bar\Bm_\text{G}$ is a constant matrix with large $v$, we can see the VEB keeps increasing with speed $v$ as the estimation loss $\Dm_\text{L}$ in~\eqref{eq:EFIM_2} increases linearly with $v^2$. Thus, Proposition~\ref{prop:EP_Ec_constant} is proved, which is also validated in the simulation results in Section \ref{sec:doppler_enhanced_localization}.
\end{proof}

\subsection{Localization and Mapping Algorithm}
With a high dimension of unknowns (size of state vector $\sv$), it is not practical to perform \ac{mle} and estimate all the unknowns at the same time (e.g., using gradient descent). {Here, similar to the ad-hoc estimator in~\cite{nazari2022mmwave}, we propose an algorithm that simplifies localization and mapping tasks into a 1D search problem.}

Consider the LOS channel has the strongest signal strength, we search along the direction of the estimated \ac{aod} $\hat \theta_{0}$ at BS, which is equivalent to the direction vector $\hat \tv_{\text{B},0}$ that can be calculated from~\eqref{eq:angle_at_BS}. For a given UE candidate position\footnote{We use the notation $\tilde \cdot$ to indicate the variables depending on the candidate $\tilde d_0$, and use the notation $\hat \cdot$ to represent the estimated channel parameters.} $\tilde \pv_0 = \tilde d_0 \hat \tv_0$ on the line $\pv = d \hat \tv_0$, the clock offset (for this specific UE candidate $\tilde \pv_0$) can be estimated as $\tilde B = c\hat \tau_0 - \Vert \tilde\pv_0 \Vert$ and the propagation distance of the NLOS channel can be obtained as $\tilde d_l = c\hat \tau_l - \tilde B$. We further define an intermediate variable $e_l$ for the $l$th path as
\begin{align}
    \tilde e_{l} & = (\tilde \pv_0-\pv_\text{B})^\top \tilde\tv_{\text{B},l} = \tilde \pv_0^\top \tilde\tv_{\text{B},l},
    \label{eq:intermediate_el}
\end{align}
where $\tilde \tv_{\text{B}, l} = \hat \tv_{\text{B}, l}$ is the estimated direction vector from angle $\hat \theta_l$ based on~\eqref{eq:angle_at_BS}. The position of the $l$th \ac{ip} can be obtained from $\tilde d_0^2 - \tilde e_{l}^2 + (\tilde d_{l,1} - \tilde e_{l})^2 = (\tilde d_l - \tilde d_{l,1})^2$ as
\begin{equation}
    \tilde \pv_{l} = \pv_\text{B} + \tilde d_{l,1} \tilde \tv_{\text{B}, l} = \tilde d_{l,1} \tilde \tv_{\text{B}, l},  \ \ \tilde d_{l,1} = \frac{\tilde d_{0}^2 -\tilde d_l^2 }{2(\tilde e_{l} - \tilde d_l)}.
\end{equation}

After obtaining all the position of the IPs, we can then obtain the direction vector $\tilde \tv_{\text{U},l}$ for each path based on \eqref{eq:state_velocity} and hence the velocity can be estimated using a least squares method as
\begin{equation}
    \tilde \vv = (\tilde \Xm^\top \tilde \Xm)^{-1}\tilde \Xm^\top \hat \bv,
    \label{eq:hat_velocity}
\end{equation}
where $\tilde \Xm = [\tilde \tv_{\text{U}, 0}, \ldots,\tilde \tv_{\text{U}, L}]^\top$, $\hat \bv = [\hat v_0, \ldots, \hat v_L]^\top$. The estimated velocity $\tilde \vv$ is the velocity that fits current UE candidate $\tilde \pv_0$ and radial velocity estimation $\hat \bv$ the best. Since  $\tilde \Xm$ is a function of $d_0$, the residual error can be expressed as $\varepsilon(d_0)=\Vert \tilde \Xm \tilde \vv - \hat \bv\Vert$, from which 
\begin{equation}
    d_0^* = \arg \min_{d_0}\varepsilon(d_0).
    \label{eq:cost_function}
\end{equation}
Finally, the estimated position of the UE can be obtained as $\pv_0^* = d_0^*\hat \tv_0$, 
and the remaining unknowns $\pv_l^*$, $B^*$ can be obtained similarly from~\eqref{eq:intermediate_el} to~\eqref{eq:hat_velocity}.
Since the positions of the IPs are obtained from the estimated channel parameters $d_l$ and $\theta_{l}$, equation~\eqref{eq:cost_function} is identical to $\argmin_{d_0} ||\tilde \gammav(\tilde \pv_{0}) - \hat \gammav||$. Further improvement can adopt weighted least square with the covariance matrix of the estimated radial velocity at each path, or refine the results using gradient descent after getting the initial result from the proposed 1D search.

\section{Numerical Results}

\subsection{Simulation Parameters}
We consider a 2D downlink scenario with a single-antenna UE and a BS with a 16-element ULA lies on the x-axis. The pilot signal $x_{g,k}$ is chosen with a constant amplitude and random phase. The channel gain for each path is calculated as $\rho_0 =  \frac{\lambda}{4\pi d_0}e^{-j \frac{2\pi}{\lambda} d_0}$ for the LOS path and $\rho_l =  \sqrt{\frac{c_{l}}{4\pi}}\frac{\lambda}{4\pi d_{l,1}d_{l,2}} e^{-j \frac{2 \pi}{\lambda} d_l}$ for the $l$th NLOS path, where $c_{l}$ is the radar cross-section (RCS).
The default simulation parameters can be found in  Table~\ref{table:Simulation_parameters}. 


\vspace{-5mm}
\begin{table}[ht]
\scriptsize
\centering
\caption{Default Simulation Parameters}
\renewcommand{\arraystretch}{1.25}
\begin{tabular} {c | c }
    \hthickline
    \textbf{Types} & \textbf{Simulation Parameters}\\
    \hline
    BS Position & {$\pv_\text{B} = [0, 0]^\top$}   \\
    \hline
    UE Position & {$\pv_0 = [5, 2]^\top$}   \\
    \hline
    BS Array Size & {$N_{\text{B}} = 16$}   \\
    \hline
    IP Positions & $\pv_1 = [-6, 8]^\top$, $\pv_2=[8, 6]^\top$ \\
    \hline
    Measurement Interval & $T_\text{int} = \unit[1]{ms}$ \\
    \hline
    RCS coefficients & $c_l = \unit[10]{m^2}$ \\
    \hline
    Carrier Frequency & {$f_c = \unit[28]{GHz}$}  \\
    \hline
    Bandwidth & {$W = \unit[400]{MHz}$}  \\
    \hline
    Number of Transmissions & {$G = 20$}  \\
    \hline
    Number of Subcarriers & {$K = 20$} \\
    \hline
    Average Transmission Power & {$P = 30$ dBm}  \\
    \hline
    Noise PSD & {$N_0 = \unit[-173.855]{dBm/Hz}$} \\
    \hline
    Noise Figure & {$\unit[10]{dBm}$} \\
    \hline
\end{tabular}
\renewcommand{\arraystretch}{1}
\label{table:Simulation_parameters}\vspace{-5mm}
\end{table}

\subsection{Performance Bounds Results}
\label{sec:doppler_enhanced_localization}
Based on the analysis from Section~\ref{sec:doppler_assisted_localization}, the minimum number of IPs that can support localization and mapping under UE mobility is 2. We use the parameters provided in Table II and visualize PEB, MEB, CEB, and VEB with different UE positions; the results are shown in Fig.~\ref{fig:heatmap}. We can see that the PEB, CEB, and OEB are showing a similar pattern, and a low error bound can be found in the convex hull formed by BS and IPs. In contrast, low VEBs are seen in the UE positions where the IP is aligned with the velocity direction.
\begin{figure}[t]
\begin{minipage}[b]{0.48\linewidth}
    \centering
    \centerline{\includegraphics[width=0.98\linewidth]{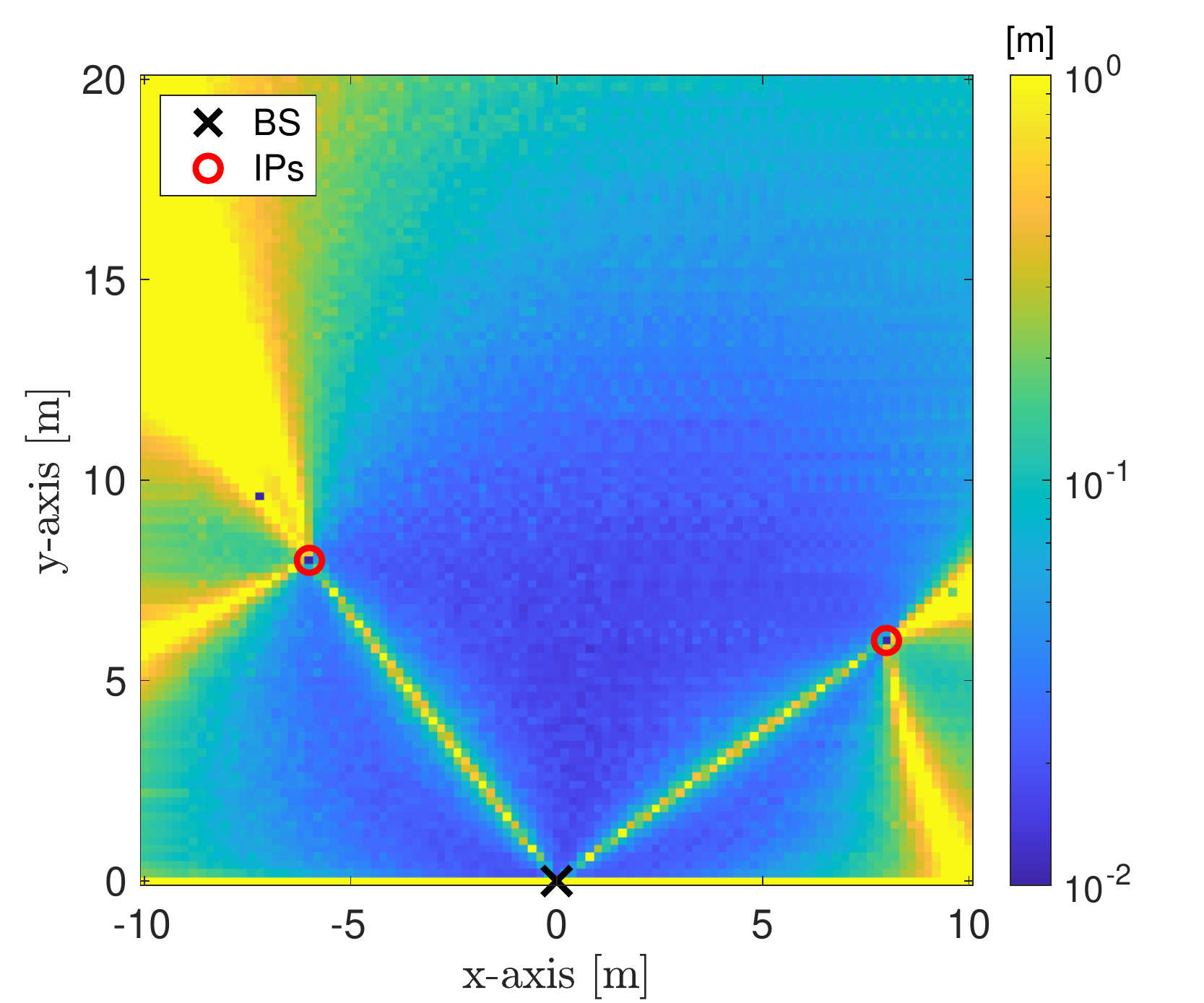}}
    \centerline{(a) PEB}
\end{minipage}
\begin{minipage}[b]{0.48\linewidth}
    \centering
    \centerline{\includegraphics[width=0.98\linewidth]{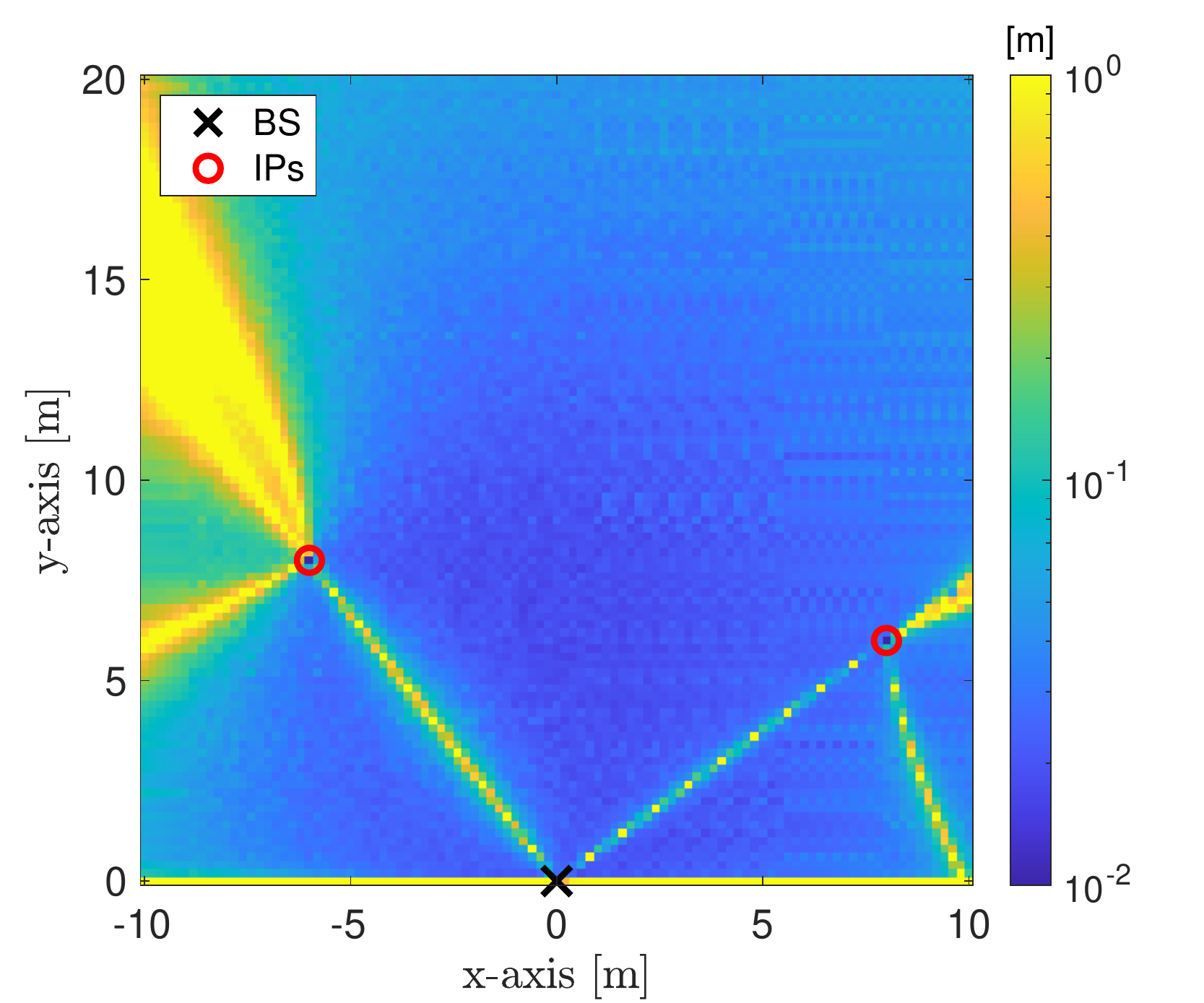}}
    \centerline{(b) MEB}
\end{minipage}
\begin{minipage}[b]{0.48\linewidth}
    \centering
    \centerline{\includegraphics[width=0.98\linewidth]{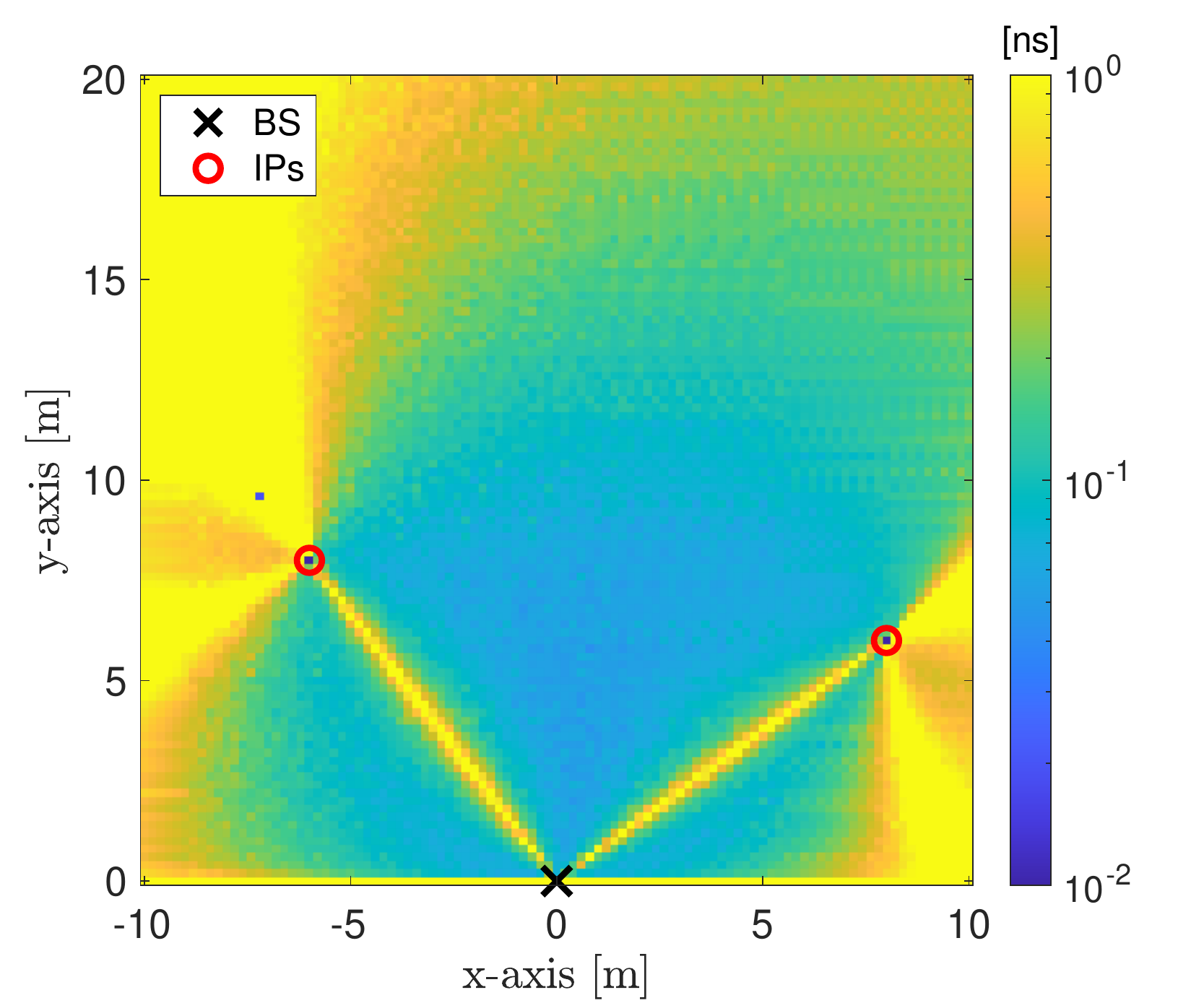}}
    \centerline{(c) CEB}
\end{minipage}
\hspace{1mm}
\begin{minipage}[b]{0.48\linewidth}
    \centering
    \centerline{\includegraphics[width=0.98\linewidth]{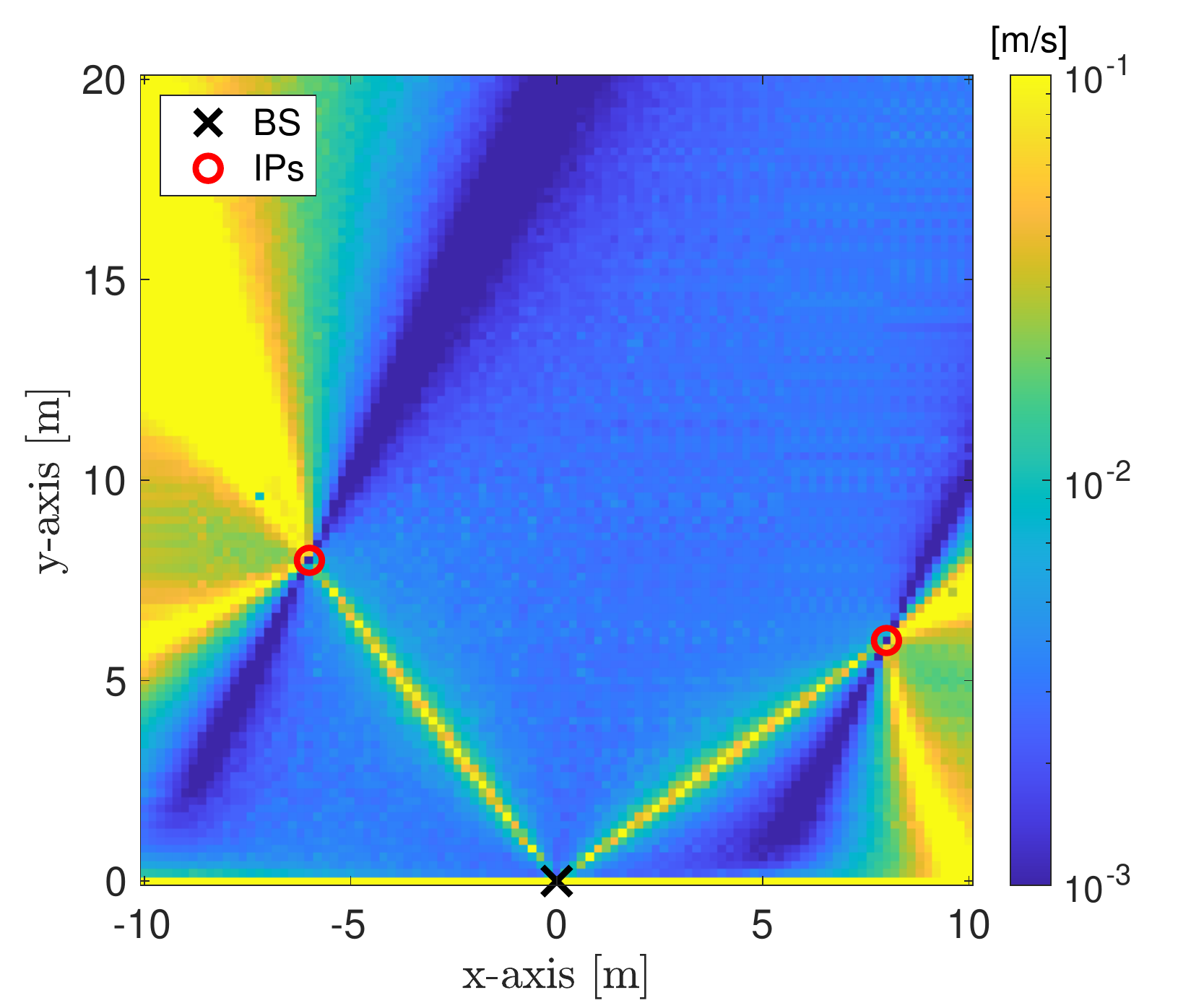}}
    \centerline{(d) VEB}
\end{minipage}
\caption{Visualization of PEB, MEB, CEB, and VEB for different UE positions. We can see that worse performance is shown in the area that lie between the BS and scattering points due to the unresolvable paths. Regarding the VEB, when the IPs is on the direction of the velocity ($[1, 2]^\top$), more accurate estimation can be achieved.}\vspace{-3mm}
\label{fig:heatmap}
\end{figure}

We evaluate how the PEB and VEB are effected by changing UE speed from $\unit[0.001]{m/s}$ to $\unit[10]{m/s}$ with a fixed UE position $\pv_0 = [5,2]^\top$. Two scenarios are evaluated, namely, velocity direction $[1, 2]^\top/\sqrt{5}$ (scenario 1) and $[2, 1]^\top/\sqrt{5}$ (scenario 2). The approximated error bound from~\eqref{eq:approximated_fim} and the PEB when $v\to\infty$ are shown in the figure. With the increase of speed $v$, the PEBs are getting lower and saturate at around $\unit[3]{m/s}$. As for the estimation of velocity, VEB keeps increasing with speed. Note that in reality the speed cannot be too large as the channel may not be coherent.

\begin{figure}[htb]
\begin{minipage}[b]{0.98\linewidth}
  \centering
%
%
\begin{tikzpicture}
[scale=1\columnwidth/10cm,font=\normalsize]

\begin{axis}[%
width=8cm,
height=3.6cm,
at={(0,0)},
scale only axis,
xmode=log,
xmin=0.001,
xmax=10,
xlabel style={font=\footnotesize\color{white!15!black}, yshift=1 ex},
xlabel={UE Speed [m/s]},
ymode=log,
ymin=0.01,
ymax=100,
yminorticks=true,
ylabel style={font=\footnotesize\color{white!15!black}, yshift= -1.5 ex},
ylabel={PEB and MEB [m]},
axis background/.style={fill=white},
xmajorgrids,
ymajorgrids,
legend style={font=\scriptsize, at={(.65, 0.35)}, anchor=south west, legend cell align=left, align=left, draw=white!5!black, legend columns=1}
]
\addplot [color=blue, dashed, line width=1.0pt, mark=o, mark options={solid, blue}]
  table[row sep=crcr]{%
0.001	2.05231675530402\\
0.00193069772888325	1.05702578152524\\
0.00372759372031494	0.612890342754208\\
0.00719685673001152	0.362506148214594\\
0.0138949549437314	0.264072603245197\\
0.0268269579527973	0.220569221209575\\
0.0517947467923121	0.199536209353877\\
0.1	0.199824688022964\\
0.193069772888325	0.194746953528184\\
0.372759372031494	0.18873168029308\\
0.719685673001152	0.186279292962096\\
1.38949549437314	0.184642094806521\\
2.68269579527972	0.184714131500351\\
5.17947467923121	0.183269307544904\\
10	0.182234050150006\\
};
\addlegendentry{PEB-1}

\addplot [color=blue, line width=1.0pt]
  table[row sep=crcr]{%
0.001	2.0092476963073\\
0.00193069772888325	1.0345923113594\\
0.00372759372031494	0.589684389231355\\
0.00719685673001152	0.345578766480947\\
0.0138949549437314	0.242088590455215\\
0.0268269579527973	0.205750492011588\\
0.0517947467923121	0.192249141931005\\
0.1	0.186000644740248\\
0.193069772888325	0.177912916728071\\
0.372759372031494	0.184537921667066\\
0.719685673001152	0.179927551264861\\
1.38949549437314	0.179892858087137\\
2.68269579527972	0.178972653166552\\
5.17947467923121	0.178943219001047\\
10	0.178225671470211\\
};
\addlegendentry{PEB-1 (Approx.)}

\addplot [color=black, dashed, line width=1.0pt]
  table[row sep=crcr]{%
0.001	0.192199332569125\\
0.00193069772888325	0.192199332569125\\
0.00372759372031494	0.192199332569125\\
0.00719685673001152	0.192199332569125\\
0.0138949549437314	0.192199332569125\\
0.0268269579527973	0.192199332569125\\
0.0517947467923121	0.192199332569125\\
0.1	0.192199332569125\\
0.193069772888325	0.192199332569125\\
0.372759372031494	0.192199332569125\\
0.719685673001152	0.192199332569125\\
1.38949549437314	0.192199332569125\\
2.68269579527972	0.192199332569125\\
5.17947467923121	0.192199332569125\\
10	0.192199332569125\\
};
\addlegendentry{PEB-1 ($v=\infty$)}

\addplot [color=red, dashed, line width=1.0pt, mark=square, mark options={solid, red}]
  table[row sep=crcr]{%
0.001	0.335160402598675\\
0.00193069772888325	0.176013784916722\\
0.00372759372031494	0.0956843294154557\\
0.00719685673001152	0.0552367340110614\\
0.0138949549437314	0.0368516556043407\\
0.0268269579527973	0.029394797303883\\
0.0517947467923121	0.0284084771849548\\
0.1	0.0279990754833024\\
0.193069772888325	0.0279194482216571\\
0.372759372031494	0.0273295392635597\\
0.719685673001152	0.0272126643995844\\
1.38949549437314	0.026689798288475\\
2.68269579527972	0.0268578789337406\\
5.17947467923121	0.0270279365893414\\
10	0.0268748996671874\\
};
\addlegendentry{PEB-2}

\addplot [color=red, line width=1.0pt]
  table[row sep=crcr]{%
0.001	0.328779252851612\\
0.00193069772888325	0.173141947438039\\
0.00372759372031494	0.0940337525732492\\
0.00719685673001152	0.0540469277971233\\
0.0138949549437314	0.0362997517134831\\
0.0268269579527973	0.0292121095387503\\
0.0517947467923121	0.0281318500227459\\
0.1	0.027100412373854\\
0.193069772888325	0.026734551848344\\
0.372759372031494	0.0267978974560756\\
0.719685673001152	0.0267429071591288\\
1.38949549437314	0.0263267884322496\\
2.68269579527972	0.0263952954190414\\
5.17947467923121	0.0267191971384466\\
10	0.0264154778102693\\
};
\addlegendentry{PEB-2 (Approx.)}

\addplot [color=black, dashdotted, line width=1.0pt]
  table[row sep=crcr]{%
0.001	0.0266498799116434\\
0.00193069772888325	0.0266498799116434\\
0.00372759372031494	0.0266498799116434\\
0.00719685673001152	0.0266498799116434\\
0.0138949549437314	0.0266498799116434\\
0.0268269579527973	0.0266498799116434\\
0.0517947467923121	0.0266498799116434\\
0.1	0.0266498799116434\\
0.193069772888325	0.0266498799116434\\
0.372759372031494	0.0266498799116434\\
0.719685673001152	0.0266498799116434\\
1.38949549437314	0.0266498799116434\\
2.68269579527972	0.0266498799116434\\
5.17947467923121	0.0266498799116434\\
10	0.0266498799116434\\
};
\addlegendentry{PEB-2 ($v=\infty$)}

\end{axis}

\end{tikzpicture}%
    \vspace{-0.8cm}
    \centerline{(a) PEB} \medskip
\end{minipage}
\hfill
\begin{minipage}[b]{0.98\linewidth}
  \centering
%
%
\begin{tikzpicture}
[scale=1\columnwidth/10cm,font=\normalsize]

\begin{axis}[%
width=8cm,
height=3.6cm,
at={(0, 0)},
scale only axis,
xmode=log,
xmin=0.001,
xmax=10,
xlabel style={font=\footnotesize\color{white!15!black}, yshift=1 ex},xlabel={UE Speed [m/s]},
ymode=log,
ymin=0.00001,
ymax=0.1,
yminorticks=true,
ylabel style={font=\footnotesize\color{white!15!black}, yshift= -1.5 ex},
ylabel={VEB [m/s]},
axis background/.style={fill=white},
xmajorgrids,
ymajorgrids,
legend style={font=\scriptsize, at={(.05, 0.7)}, anchor=south west, legend cell align=left, align=left, draw=white!5!black, legend columns=2}
]
\addplot [color=blue, dashed, line width=1.0pt, mark=o, mark options={solid, blue}]
  table[row sep=crcr]{%
0.001	9.99683745537611e-05\\
0.00193069772888325	9.97095611163948e-05\\
0.00372759372031494	0.000112112924339609\\
0.00719685673001152	0.000133114773752927\\
0.0138949549437314	0.000193039707963511\\
0.0268269579527973	0.000315677001246084\\
0.0517947467923121	0.00055358800071274\\
0.1	0.00107855356708907\\
0.193069772888325	0.0020159165665039\\
0.372759372031494	0.00371385179168138\\
0.719685673001152	0.00709258900114552\\
1.38949549437314	0.0135427473703326\\
2.68269579527972	0.0262745113497429\\
5.17947467923121	0.0500224383647239\\
10	0.0959705252547061\\
};
\addlegendentry{VEB-1}

\addplot [color=blue, line width=1.0pt]
  table[row sep=crcr]{%
0.001	9.74213733042492e-05\\
0.00193069772888325	9.68937422593945e-05\\
0.00372759372031494	0.0001070429031395\\
0.00719685673001152	0.000124820124679243\\
0.0138949549437314	0.000173838440472195\\
0.0268269579527973	0.000290308925370555\\
0.0517947467923121	0.000529169061291027\\
0.1	0.000991768928208965\\
0.193069772888325	0.00181257154568342\\
0.372759372031494	0.00362167950877296\\
0.719685673001152	0.00681518332603513\\
1.38949549437314	0.013146881316181\\
2.68269579527972	0.0253509004091447\\
5.17947467923121	0.0486834140817337\\
10	0.0935134819339148\\
};
\addlegendentry{VEB-1 (Approx)}

\addplot [color=red, dashed, line width=1.0pt, mark=square, mark options={solid, red}]
  table[row sep=crcr]{%
0.001	1.36720386095276e-05\\
0.00193069772888325	1.36931881540449e-05\\
0.00372759372031494	1.40436640053505e-05\\
0.00719685673001152	1.53387074074337e-05\\
0.0138949549437314	1.87502306165807e-05\\
0.0268269579527973	2.68102358146405e-05\\
0.0517947467923121	4.69440005738694e-05\\
0.1	9.38182829363082e-05\\
0.193069772888325	0.000184618931005487\\
0.372759372031494	0.000314996697241658\\
0.719685673001152	0.000613668275969247\\
1.38949549437314	0.00114667905754419\\
2.68269579527972	0.0022280747812471\\
5.17947467923121	0.00423783029274279\\
10	0.00828007603289085\\
};
\addlegendentry{VEB-2}

\addplot [color=red, line width=1.0pt]
  table[row sep=crcr]{%
0.001	1.32099711255812e-05\\
0.00193069772888325	1.32652872326956e-05\\
0.00372759372031494	1.35615407081692e-05\\
0.00719685673001152	1.46220398071805e-05\\
0.0138949549437314	1.81022147605791e-05\\
0.0268269579527973	2.67447513479014e-05\\
0.0517947467923121	4.6327897467805e-05\\
0.1	8.49061050375584e-05\\
0.193069772888325	0.000159489094560843\\
0.372759372031494	0.000302274614331003\\
0.719685673001152	0.000589946373996789\\
1.38949549437314	0.00110630020910665\\
2.68269579527972	0.00215970436274823\\
5.17947467923121	0.00414415058342096\\
10	0.00799295725907998\\
};
\addlegendentry{VEB-2 (Approx)}
\end{axis}

\end{tikzpicture}%
    \vspace{-0.8cm}
    \centerline{(b) VEB}
\end{minipage}
\caption{PEB and VEB change with different UE speed. The velocity directions are $[1, 2]^\top/\sqrt{5}$ and $[2, 1]^\top/\sqrt{5}$ for scenario 1 and 2, respectively. We can see that the improvement on PEB and MEB with increased speed saturates at around $\unit[0.1]{m/s}$ {(which aligns well with the theoretical analysis as dashed curves)}, whereas the VEB keeps increasing.}\vspace{-5mm}
\label{fig-3}
\end{figure}
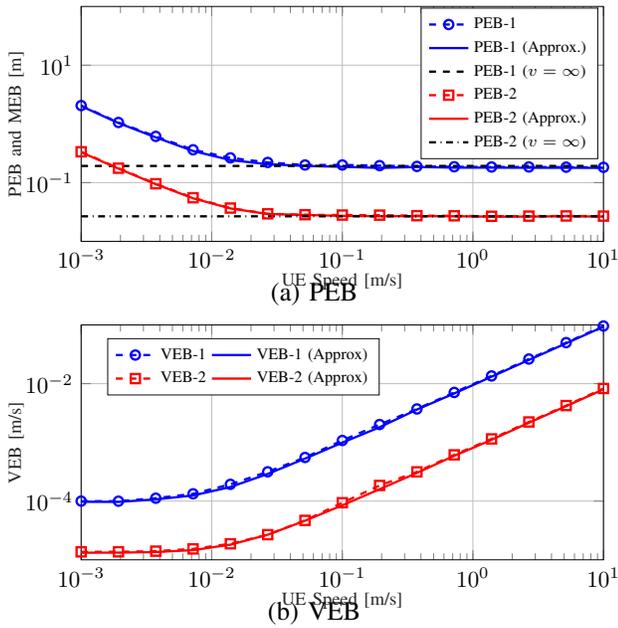


\subsection{Evaluation of the Localization Algorithm}
We evaluate the performance of the proposed localization algorithm with the error bound discussed in Section~\ref{sec:performance_analysis_and_localization_algorithm}. Since channel estimation is not considered in this work, we generate channel parameter vector (by assuming the channel estimation is efficient) following a multi-variable Gaussian distribution as $\tilde \gammav\in \mathcal{N}(\gammav, \mathbf{I}(\gammav)^{-1})$, where $\mathbf{I}(\gammav)$ is the EFIM of the unknown channel parameters obtained from obtained from~\eqref{eq:FIM_measurement}. The results are shown in Fig.~\ref{fig:estiamtor_and_bound} with $500$ simulations performed for each point. We can see that even with a simple algorithm, the localization and mapping results are close to the bound when the transmission power is above $\unit[15]{dBm}$. This is due to the signal from LOS path is much stronger than the NLOS path and hence it is reasonable to search along the LOS direction. Note that the localization and mapping are done with a limited number of measurements within the coherence time, better results can be obtained in tracking scenario with the assist of estimated velocity. 

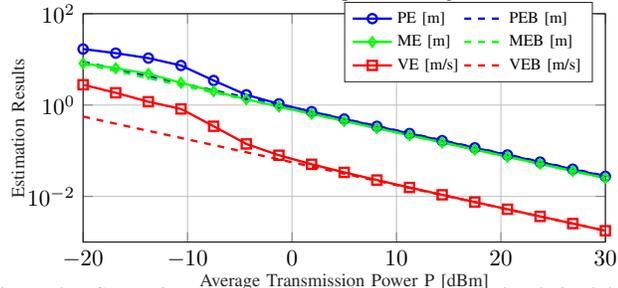
\begin{figure}[htb]
\begin{minipage}[b]{0.98\linewidth}
    \centering
%
%
\begin{tikzpicture}
[scale=1\columnwidth/10cm,font=\normalsize]
\begin{axis}[%
width=8cm,
height=3.5cm,
at={(0, 0)},
scale only axis,
xmin=-20,
xmax=30,
xlabel style={font=\footnotesize\color{white!15!black}, yshift=1 ex},xlabel={UE Speed [m/s]},
xlabel={Average Transmission Power P [dBm]},
ymode=log,
ymin=0.001,
ymax=100,
yminorticks=true,
ylabel style={font=\footnotesize\color{white!15!black}, yshift= -1.5 ex},
ylabel={Estimation Results},
axis background/.style={fill=white},
xmajorgrids,
ymajorgrids,
yminorgrids,
legend style={font=\scriptsize, at={(0.5,0.7)}, anchor=south west, legend cell align=left, align=left, draw=white!15!black, legend columns=2}
]
\addplot [color=blue, line width=1.0pt, mark=o, mark options={solid, blue}]
  table[row sep=crcr]{%
-20	16.853818397864\\
-16.875	13.7047887156354\\
-13.75	10.6729911753109\\
-10.625	7.35941224374096\\
-7.5	3.44970243770965\\
-4.375	1.67656145483772\\
-1.25	1.06167576471805\\
1.875	0.716942967284543\\
5	0.493312178003039\\
8.125	0.342040261929035\\
11.25	0.237967325548852\\
14.375	0.165824028107826\\
17.5	0.115638794328881\\
20.625	0.0806706778397434\\
23.75	0.0562860187121121\\
26.875	0.0392755575611606\\
30	0.0274067848712659\\
};
\addlegendentry{PE [m]}

\addplot [color=blue, dashed, line width=1.0pt]
  table[row sep=crcr]{%
-20	8.84952018232522\\
-16.875	6.1754658445612\\
-13.75	4.30942894209227\\
-10.625	3.00725131907228\\
-7.5	2.09855194680882\\
-4.375	1.46443373240041\\
-1.25	1.02192664796949\\
1.875	0.713131670436434\\
5	0.49764509066275\\
8.125	0.347271964669828\\
11.25	0.242336998210981\\
14.375	0.169110169194753\\
17.5	0.118010248274926\\
20.625	0.0823511605731494\\
23.75	0.0574671585466496\\
26.875	0.0401023408588432\\
30	0.0279846399757762\\
};
\addlegendentry{PEB [m]}

\addplot [color=green, line width=1.0pt, mark=diamond, mark options={solid, green}]
  table[row sep=crcr]{%
-20	8.1620983244932\\
-16.875	6.40175841053234\\
-13.75	4.82909842635068\\
-10.625	3.07144654497101\\
-7.5	2.00662869344338\\
-4.375	1.34355823910902\\
-1.25	0.924103089881719\\
1.875	0.641251841841541\\
5	0.446448279425309\\
8.125	0.311256552211582\\
11.25	0.217132951502559\\
14.375	0.151510086453984\\
17.5	0.105730205005615\\
20.625	0.0737852536199825\\
23.75	0.0514921640432047\\
26.875	0.0359343590082808\\
30	0.0250769431664088\\
};
\addlegendentry{ME [m]}

\addplot [color=green, dashed, line width=1.0pt]
  table[row sep=crcr]{%
-20	7.97884547886667\\
-16.875	5.56788240702403\\
-13.75	3.88543863652454\\
-10.625	2.71137791616304\\
-7.5	1.89208243701218\\
-4.375	1.32035299362329\\
-1.25	0.921382701761617\\
1.875	0.642969029650073\\
5	0.448683454007491\\
8.125	0.313105037126993\\
11.25	0.218494271180899\\
14.375	0.152471985046697\\
17.5	0.106399614499882\\
20.625	0.0742489052153166\\
23.75	0.0518131569516092\\
26.875	0.0361568056189775\\
30	0.0252313248117556\\
};
\addlegendentry{MEB [m]}

\addplot [color=red, line width=1.0pt, mark=square, mark options={solid, red}]
  table[row sep=crcr]{%
-20	2.78501254757689\\
-16.875	1.85857817286902\\
-13.75	1.18318110995233\\
-10.625	0.82231234187681\\
-7.5	0.343803972579983\\
-4.375	0.140814168041175\\
-1.25	0.0795652237833152\\
1.875	0.0505277003616479\\
5	0.0335736253565597\\
8.125	0.0228107455589196\\
11.25	0.0156796042168751\\
14.375	0.0108457988105311\\
17.5	0.00752852285082805\\
20.625	0.00523637209058395\\
23.75	0.00364646441501156\\
26.875	0.00254115790753317\\
30	0.00177168687706555\\
};
\addlegendentry{VE [m/s]}

\addplot [color=red, dashed, line width=1.0pt]
  table[row sep=crcr]{%
-20	0.559639967103051\\
-16.875	0.390533885554478\\
-13.75	0.272526489764078\\
-10.625	0.190177319741873\\
-7.5	0.132711550262553\\
-4.375	0.0926101787373768\\
-1.25	0.0646262151922804\\
1.875	0.0450981495449086\\
5	0.0314708680730213\\
8.125	0.0219613342734441\\
11.25	0.0153252907403406\\
14.375	0.0106944566004793\\
17.5	0.00746291890427096\\
20.625	0.00520785306372916\\
23.75	0.00363419914932636\\
26.875	0.00253605531787165\\
30	0.00176973696570734\\
};
\addlegendentry{VEB [m/s]}

\end{axis}

\end{tikzpicture}%
\end{minipage}
\vspace{-1cm}
\caption{Comparison between simulation results and the derived lower bounds (PEB, VEB, MEB). When $P\ge \unit[5]{dBm}$, the estimation results using the proposed localization and mapping algorithm attach the bounds.}\vspace{-5mm}
\label{fig:estiamtor_and_bound}
\end{figure}

\section{Conclusion}
With a sufficient number of multipaths, UE mobility helps localization and mapping by providing extra channel parameters. We have shown that mobility can enable localization in a SIMO uplink scenario where BS and UE are not synchronized. In addition, even though extra unknowns (i.e., velocity) are introduced, mobility can enhance localization and mapping to some extent with an increased speed. We also analyzed the system performance under different scenarios and evaluated the performance of the proposed localization algorithm. Future works could be the research on channel estimation under UE mobility and Doppler-assisted simultaneous localization and mapping in tracking scenarios.

\section*{Acknowledgment}
This work was supported, in part, by the European Commission through the H2020 project Hexa-X (Grant Agreement no. 101015956), and by the Wallenberg AI, Autonomous Systems and Software Program (WASP) funded by Knut and Alice Wallenberg Foundation.

\balance
\bibliographystyle{IEEEtran}
\bibliography{reference}

\end{document}